\newtheorem{thm}{Theorem}[section]
\newtheorem*{thm*}{Theorem}
\newtheorem{lem}[thm]{Lemma}
\newtheorem{prop}[thm]{Proposition}
\newtheorem*{prop*}{Proposition}
\newtheorem{cor}[thm]{Corollary}
\theoremstyle{definition}
\newtheorem{defn}[thm]{Definition}
\newtheorem{notation}[thm]{Notation}
\newtheorem{remark}[thm]{Remark}
\newtheorem{question}[thm]{Question}
\newtheorem{example}[thm]{Example}
\def\la{\lambda}
\def\La{\Lambda}
\def\bb{\mathbb}
\def\de{\delta}
\def\bb{\mathbb}
\def\bd{\mathbf}
\def\sg{\sigma}
\def\G{\Gamma}
\def\cc{\mathcal}
\def\i{\imath}
\newcommand{\norm}[1]{\left\lVert #1 \right\rVert}
\DeclareMathOperator{\re}{Re}
\DeclareMathOperator{\im}{Im}
\DeclareMathOperator{\id}{id}
\DeclareMathOperator{\cp}{\mathcal{CP}}
\DeclareMathOperator{\tr}{tr}
\DeclareMathOperator{\Ch}{Ch}
\DeclareMathOperator{\Lin}{lin}
\DeclareMathOperator{\Quad}{quad}
\DeclareMathOperator{\diag}{diag}
\newcommand\ip[2]{\left\langle #1\, , #2 \right\rangle}
\DeclareMathOperator{\CP}{\text{CP}}
\renewcommand{\P}{\text{P}}
\begin{document}

\title{Approximating Projections by Quantum Operations}

\author[Araiza]{Roy Araiza}

\address{Department of Mathematics \& IQUIST, University of Illinois at Urbana-Champaign, Urbana, IL 61801}
\email{raraiza@illinois.edu}
\urladdr{https://math.illinois.edu/directory/profile/raraiza}

\author[Griffin]{Colton Griffin}
\author[Khilnani]{Aneesh Khilnani}
\author[Sinclair]{Thomas Sinclair}

\address{Mathematics Department, Purdue University, 150 N. University Street, West Lafayette, IN 47907-2067}
\email{tsincla@purdue.edu}
\urladdr{http://www.math.purdue.edu/~tsincla/}

\subjclass[2020]{46L07; 05C50, 47C15, 81P47}

\keywords{quantum channels; graph invariants; semidefinite programming}


\setcounter{tocdepth}{1}
\maketitle

\begin{abstract}
    Using techniques from semidefinite programming, we study the problem of finding a closest quantum channel to the projection onto a matricial subsystem. We derive two invariants of matricial subsystems which are related to the quantum Lov\'asz theta function of Duan, Severini, and Winter.
\end{abstract}

\section{Introduction} 
A matricial system $\cc S$ will be a subspace of complex $n\times n$ matrices $M_n$ which contains the unit and is closed under taking adjoints, i.e., $\cc S^* = \cc S$. Matricial systems were first systematically studied by Choi and Effros \cite{ChoiEffros1977} and have been the subject of heavy investigation recently in quantum information theory under the guise of ``quantum graphs.'' Many interesting ``quantum'' extensions of classical graph invariants, such as clique, independence, and chromatic numbers and the Lov\'asz theta invariant, have been found for quantum graphs. (See, for example, \cite{Cameron2007, Winter2013}.) 

In the study of matricial systems, \emph{trace duality} plays an important role, which is the fact that the cone of positive-semidefinite matrices is self-polar under the bilinear pairing $(A, B)\mapsto \tr(B^*A)$. The trace pairing gives a non-degenerate inner product structure on $M_n$, known by various names such as the Frobenius or Hilbert--Schmidt inner product. A matricial system $\cc S\subset M_n$ can be completely characterized as the range of a unique orthogonal projection $P_{\cc S}$ with respect to the Hilbert--Schmidt inner product.

Given a linear operator $\Phi: M_n\to M_n$, we can define an adjoint $\Phi^\dagger: M_n\to M_n$ determined by the functional equation $\tr(B^*\Phi(A)) = \tr(\Phi^\dagger(B)^*A)$ for all $A,B\in M_n$. The projection $P_{\cc S}$ onto a matricial system has many nice properties. For instance, $P_{\cc S}$ is unital, i.e., $P_{\cc S}(1) = 1$, sends self-adjoint matrices to self-adjoint matrices, and has $P_{\cc S}^\dagger = P_{\cc S}$. The fact that $P_{\cc S}$ is unital and is its own adjoint implies that $P_{\cc S}$ is trace-preserving as well, that is, $\tr(P_{\cc S}(A)) = \tr(A)$ for all $A\in M_n$. However, $P_{\cc S}$ rarely preserves the set of positive-semidefinite matrices, so is generally not a \emph{quantum operation}, to which we refer the reader to the Background section below for a precise definition. In fact, $P_{\cc S}$ is a quantum operation exactly when $\cc S$ is injective as an operator system in the sense of Choi and Effros \cite{ChoiEffros1977}, the most significant examples being when $\cc S$ is a subalgebra of $M_n$. For a matricial system which is a subalgebra, $P_{\cc S}$ is known as the (unique) trace-preserving conditional expectation of $M_n$ onto $\cc S$. 

The goal of this note is to investigate quantum operations $\Phi$ which  approximate $P_{\cc S}$ as closely as possible while sharing the same properties outlined in the previous paragraph. Precisely, we would like to investigate quantum operations $\Phi: M_n\to M_n$ which are unital, preserve the trace, and whose range is equal to $\cc S$. As the projection canonically identifies the system, any such quantum operation should contain interesting information on the structure and properties of the matricial system and should be able to be used to derive useful invariants. As mentioned already, the best approximating operation to $P_{\cc S}$ being itself is equivalent to injectivity. 

We begin by defining an inner-product metric on the space of linear operators on $M_n$, compatible with the cone structure given by the positivity-preserving operators, which will be used as the metric for how closely a quantum operation approximates $P_{\cc S}$. Since we are working with an inner-product metric and the class of quantum operations we consider forms a non-empty convex set, this ensures that our problem is well-posed with a unique solution. From this basic problem, we define two numerical invariants $\phi_{\Quad}(\cc S)$, which is derived from the minimal distance quantum operation to $P_{\cc S}$, and $\phi_{\Lin}(\cc S)$ which measures the largest incident angle of such a quantum operation with $P_{\cc S}$. Crucially, $\phi_{\Lin}(\cc S)$ takes the form of a (complex) semidefinite program, so is effectively computable. Both of these invariants bear more than a passing resemblance to the quantum Lov\'asz theta invariant of Duan, Severini, and Winter \cite{Winter2013}, though both are distinct from it. We explicitly compute the optimal quantum operation for a family of matricial systems first studied by Farenick and Paulsen in \cite{FarenickPaulsen2012} and relate the answer to their work on quotients of operator systems.

After studying the problem in full generality, we concentrate on the case of matricial systems given by classical undirected graphs. In this case, we show that the approximation problem reduces from the quantum regime of operators on matrices to a simpler problem of matrix approximation in the Hilbert--Schmidt metric. As a consequence, we show in this case that $\phi_{\Quad}$ and $\phi_{\Lin}$ are both given by semidefinite programs. We discuss how these graph invariants are related to the famous and well-studied theta invariant of Lov\'asz \cite{Lovasz1979}. While we show that these invariants lack many of the properties that make the Lov\'asz theta invariant useful in so many applications, they may turn out to be of some interest in their own right. 

The outline of the paper is as follows. Section 2 contains background information on quantum operations and complex semidefinite programs. In Section 3 we define an inner product on the space of quantum operations and prove some basic properties about it. Section 4 contains the definitions of $\phi_{\Quad}$ and $\phi_{\Lin}$ along with proofs of all basic properties, computations, examples, and counterexamples.

\section{Background} 
We will denote the $n\times n$ complex matrices by $M_n$, and $M_n^+$ will denote the positive semidefinite $n\times n$ complex matrices. We will denote by $\cc L(M_n)$ the set of all linear maps from $M_n$ to itself. The trace of a matrix $A$ will be denoted by $\tr(A)$.

We define the Hilbert--Schmidt inner product on $M_n$ by 
\begin{equation*}
    \ip{A}{B}_2 := \tr(B^*A) = \sum_{i,j=1}^n A_{ij}\overline{B_{ij}},
\end{equation*} 
with $\|A\|_2 = \tr(A^*A)^{1/2}$ being the corresponding Hilbert--Schmidt norm on $M_n$. If $A, B\in M_n$ are hermitian, for convenience we will occasionally use $A\bullet B$ to denote the Hilbert--Schmidt inner product.

\begin{defn}
    We will say that a map $\Phi\in\cc L(M_n)$ is \emph{positive} if $\Phi(M_n^+)\subset M_n^+$. We will say that $\Phi$ is \emph{completely positive} if $\Phi\otimes \id_{M_k}\in \cc L(M_n\otimes M_k)$ is positive for all $k\in \bb N$. We will denote by $\cc P(M_n)$ the cone of positive maps in $\cc L(M_n)$ and by $\cp(M_n)$ the cone of completely positive maps.
\end{defn}

There is a canonical linear isomorphism $\Ch$ from $\cc L(M_n)$ to $M_n\otimes M_n$ given by 
\begin{equation*}
    \Ch: \Phi \mapsto \sum_{i,j=1}^n E_{ij}\otimes \Phi(E_{ij}).
\end{equation*}
The matrix $\Ch(\Phi)\in M_n\otimes M_n$ is known as the \emph{Choi matrix} associated to $\Phi$. The following is a foundational result of M.D.~Choi which we will use repeatedly; see \cite[Theorem 3.14]{paulsen2002completely} for a proof.

\begin{prop}[Choi's Theorem]
    The map $\Phi\in \cc L(M_n)$ is completely positive if and only if $\Ch(\Phi)$ is positive semidefinite.
\end{prop}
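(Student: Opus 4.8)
The plan is to prove both directions through the correspondence between a completely positive map $\Phi$ and the positivity of its Choi matrix $\Ch(\Phi) = \sum_{i,j} E_{ij}\otimes\Phi(E_{ij})$, which is precisely $(\id_{M_n}\otimes\Phi)$ applied to the (unnormalized) maximally entangled element $\sum_{i,j} E_{ij}\otimes E_{ij}$. The key observation linking the two sides is that $\sum_{i,j} E_{ij}\otimes E_{ij}$ is itself positive semidefinite: it equals $\ket{\xi}\bra{\xi}$ for the vector $\xi = \sum_i e_i\otimes e_i \in \bb C^n\otimes\bb C^n$.

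For the forward direction, suppose $\Phi$ is completely positive. Then in particular $\id_{M_n}\otimes\Phi$ is positive, so it carries the positive semidefinite matrix $\ket{\xi}\bra{\xi}$ to a positive semidefinite matrix; but that image is exactly $\Ch(\Phi)$, so $\Ch(\Phi)\geq 0$. This direction is essentially immediate once the entangled-vector identity is in place.

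The converse is the substantive half. Assume $\Ch(\Phi)\geq 0$; we must show $\Phi\otimes\id_{M_k}$ is positive for every $k$ (equivalently, by symmetry of the tensor factors, $\id_{M_k}\otimes\Phi$). First reduce to testing positivity on rank-one positive matrices $\ket{\eta}\bra{\eta}$ with $\eta\in\bb C^k\otimes\bb C^n$, since every positive matrix is a sum of such. Write $\Ch(\Phi) = \sum_{\ell} \ket{v_\ell}\bra{v_\ell}$ using the spectral decomposition (absorbing eigenvalues), and define operators $V_\ell: \bb C^n\to\bb C^n$ by declaring that $v_\ell = \sum_i e_i\otimes V_\ell e_i$, i.e. reading off the ``blocks'' of each eigenvector. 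A direct computation then shows $\Phi(A) = \sum_\ell V_\ell A V_\ell^*$ for all $A\in M_n$ — this is the Kraus/Stinespring form, and verifying it amounts to checking the formula on the matrix units $E_{ij}$ and matching against the block structure of $\Ch(\Phi)$. Once $\Phi$ has the form $A\mapsto \sum_\ell V_\ell A V_\ell^*$, complete positivity is automatic: for any $k$ and any $X\in (M_n\otimes M_k)^+$,
\begin{equation*}
    (\Phi\otimes\id_{M_k})(X) = \sum_\ell (V_\ell\otimes I_k)\, X\, (V_\ell\otimes I_k)^* \geq 0,
\end{equation*}
being a sum of matrices of the form $WXW^*$ with $X\geq 0$.

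I expect the main obstacle to be the bookkeeping in the converse: correctly setting up the dictionary between vectors in $\bb C^n\otimes\bb C^n$ and operators on $\bb C^n$ (the ``vectorization'' map), and then verifying that the Kraus operators extracted from the eigenvectors of $\Ch(\Phi)$ genuinely reconstitute $\Phi$ on all of $M_n$. None of this is deep, but it requires care with indices; since a complete proof is available in \cite[Theorem 3.14]{paulsen2002completely}, one option is to record only the two key ideas — the entangled-vector identity for the easy direction, and the eigenvector-to-Kraus-operator passage for the hard direction — and refer there for the full details.
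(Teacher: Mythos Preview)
Your proposal is correct and outlines the standard argument; the paper itself does not supply a proof of this proposition but simply refers the reader to \cite[Theorem 3.14]{paulsen2002completely}, whose argument is exactly the one you sketch (the maximally entangled vector identity for the forward direction, and the eigenvector-to-Kraus-operator passage for the converse). Your closing suggestion---record the two key ideas and cite Paulsen for details---is precisely what the paper does.
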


\subsection{Quantum operations}
We now recall the axiomatic approach to quantum operations. Fix a system $Q$ (a finite-dimensional Hilbert space) and let $S$ denote the set of density operators on $Q$. This is to say $\rho \in S$ if $\rho \in \cc L(Q)^+$ and $\tr(\rho) = 1.$ Then a \emph{quantum operation} is defined to be a map $\Phi: \cc L(Q) \to \cc L(Q)$ satisfying the following three properties: \begin{enumerate}
    \item $\tr(\Phi(\rho))$ is the probability that the process represented by the operation $\Phi$ occurs, when $\rho$ is the initial state of $Q$. Thus, $0 \leq \tr(\Phi (\rho)) \leq 1$ for all $\rho \in S$. 
    \item $\Phi$ is convex in the sense if $\{p_i\}_i$ is a probability distribution and $\{\rho_i\}_i \subset S$ then $\Phi(\sum p_i \rho_i) = \sum p_i \Phi(\rho_i).$
    \item $\Phi$ is completely positive. 
\end{enumerate}

Such examples of quantum operations are inner actions (conjugation by a unitary), and the partial trace. Throughout the manuscript we will consider a special class of quantum operations which will be linear maps $\Phi \in \cc L(M_n)$ such that $\Phi$ is unital completely positive and trace-preserving.

\subsection{Semidefinite programming}

Most of the literature on semidefinite programming focuses on the case of real matrices. As by necessity our semidefinite programs use hermitian matrices, we collect some background on complex semidefinite programs here. The reader may consult \cite{GoemansWilliamson2004, Watrous2009} for further information on complex semidefinite programs and \cite{GartnerMatousek2012, Lovasz2003} for the general theory of semidefinite programming.

As our starting point we will say that a \emph{complex semidefinite program} is an optimization problem which can be expressed in the following form:

\begin{equation}\label{eq:complex-primal}
    \begin{aligned}
        &\text{maximize} && C\bullet X\\
        &\text{subject to} && A_i\bullet X = b_i,\ i=1, \dotsc, k\\
        & && X\in M_n^+
    \end{aligned}
\end{equation}
where $C$ and $A_1,\dotsc,A_k$ are hermitian $n\times n$ complex matrices and $b_1,\dotsc,b_k$ are (necessarily) real numbers.

As observed in \cite[section 3]{GoemansWilliamson2004}, every complex semidefinite program of the form (\ref{eq:complex-primal}) can be written as a real semidefinite program as follows.
    
\begin{equation}\label{eq:complex-to-real}
    \begin{aligned}
        &\text{maximize} && C'\bullet Y\\
        &\text{subject to} && A_i'\bullet Y = 2b_i, &&& i=1, \dotsc, k\\
        & && F_{ij}\bullet Y =0, &&&i,j=1,\dotsc, n\\
        & && G_{ij}\bullet Y =0, &&&i,j=1,\dotsc,n\\
        & && Y\in M_{2n}(\bb R)^+
    \end{aligned}
\end{equation}
where for $A\in M_n$ we write 
\begin{equation*}
    \begin{split}
        &A' := \begin{pmatrix} \re A & -\im A\\ \im A & \re A  \end{pmatrix},\ C' := \begin{pmatrix} \re C & - \im C\\ \im C & \re C\end{pmatrix}\\
        &F_{ij} := \begin{pmatrix} E_{ij} & 0\\ 0 & -E_{ij}\end{pmatrix},\ G_{ij} := \begin{pmatrix} 0 & E_{ij}\\ E_{ij} & 0\end{pmatrix}.
    \end{split}
\end{equation*}
Note that $\re A$ and $\im A$ represent the element-wise real and imaginary parts, i.e. $(\re A)_{ij} = \re(A_{ij})$.
It is straightforward to check that the program (\ref{eq:complex-to-real}) returns twice the value of the program (\ref{eq:complex-primal}).

We now write the dual program to (\ref{eq:complex-to-real}):

\begin{equation}\label{eq:complex-to-real-dual}
    \begin{aligned}
        &\text{minimize} && 2b^Ty\\
        &\text{subject to} && \sum_{i=1}^k y_i A_i' + Z - C'\in M_{2n}(\bb R)^+
    \end{aligned}
\end{equation}
where $b = [b_1,\dotsc,b_k]^T$ and $Z = \begin{pmatrix} P & Q\\ Q & -P\end{pmatrix}$ for some $P,Q\in M_n(\bb R)$. Since $A_1',\dotsc,A_k',C'$ are invariant under the involution $\begin{pmatrix} A & B\\ C & D\end{pmatrix}\mapsto \begin{pmatrix} D & -C\\ -B & A\end{pmatrix}$ and this involution preserves $M_{2n}(\bb R)^+$ and leaves the objective function invariant, by averaging we can omit the term $Z$ without affecting the value of the program. Translating the resulting program back to complex form gives the complex dual program to the complex primal semidefinite program (\ref{eq:complex-primal}):

\begin{equation}\label{eq:complex-dual}
    \begin{aligned}
        &\text{minimize} &&b^Ty\\
        &\text{subject to} && \sum_{i=1}^k y_iA_i - C\in M_n^+
    \end{aligned}
\end{equation}
where $b = [b_1,\dotsc,b_k]^T$.

This allows us to restate the strong duality theorem for (real) semidefinite programs in the setting of complex semidefinite programs. We thus refer the reader to \cite[chapter 4]{GartnerMatousek2012} or \cite[section 6.3.1]{Lovasz2003} for background on duality theory in semidefinite programming and a proof of the following result.

\begin{prop}
    Consider the optimal values $v_{primal}$ and $v_{dual}$ of the programs (\ref{eq:complex-primal}) and (\ref{eq:complex-dual}), respectively. If both programs are feasible, we have that $v_{dual}\geq v_{primal}$. Moreover, if there is $X\in M_n^+$ invertible so that $A_i\bullet X = b_i$ for all $i=1,\dotsc,k$, then $v_{primal} = v_{dual}$.
\end{prop}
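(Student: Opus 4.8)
The plan is to derive the statement from the strong duality theorem for \emph{real} semidefinite programs, using the dictionary between (\ref{eq:complex-primal})--(\ref{eq:complex-dual}) and (\ref{eq:complex-to-real})--(\ref{eq:complex-to-real-dual}) established above. Write $A\mapsto A'$ for the realification map defined above, where $A' = \begin{pmatrix}\re A & -\im A\\ \im A & \re A\end{pmatrix}\in M_{2n}(\bb R)$. Three elementary properties of this map drive the argument: it is $\bb R$-linear; for hermitian $A,B$ one has $A'\bullet B' = 2\,(A\bullet B)$, both sides being real; and $A\in M_n^+$, respectively $A$ positive definite, if and only if $A'\in M_{2n}(\bb R)^+$, respectively $A'$ positive definite. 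The last property is immediate from the identity $v^*Av = \xi^T A'\xi$, where $\xi=\begin{pmatrix}\re v\\ \im v\end{pmatrix}$ ranges over $\bb R^{2n}$ as $v$ ranges over $\bb C^n$.

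\emph{Weak duality.} This part needs no reduction. If $X$ is feasible for (\ref{eq:complex-primal}) and $y=[y_1,\dots,y_k]^T$ is feasible for (\ref{eq:complex-dual}), then
\begin{equation*}
    b^Ty - C\bullet X \;=\; \sum_{i=1}^k y_i\,(A_i\bullet X) - C\bullet X \;=\; \Big(\textstyle\sum_{i=1}^k y_iA_i - C\Big)\bullet X \;\ge\; 0,
\end{equation*}
since for $P,X\in M_n^+$ we have $P\bullet X = \tr\!\big(X^{1/2}PX^{1/2}\big)\ge 0$. Taking the infimum over dual-feasible $y$ and the supremum over primal-feasible $X$ gives $v_{dual}\ge v_{primal}$; in particular, if both programs are feasible then $v_{primal}$ and $v_{dual}$ are finite real numbers.

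\emph{Strong duality.} The three properties above, together with the shape of $F_{ij}$ and $G_{ij}$, show that (\ref{eq:complex-to-real}) and (\ref{eq:complex-to-real-dual}) are just (\ref{eq:complex-primal}) and (\ref{eq:complex-dual}) rescaled by a factor of $2$: the constraints $F_{ij}\bullet Y = G_{ij}\bullet Y = 0$ force any feasible $Y$ for (\ref{eq:complex-to-real}) to equal $X'$ for a unique hermitian $X$, and then $A_i'\bullet X' = 2b_i$ iff $A_i\bullet X = b_i$, while $C'\bullet X' = 2(C\bullet X)$ and $X'\in M_{2n}(\bb R)^+$ iff $X\in M_n^+$; hence the optimal value of (\ref{eq:complex-to-real}) is $2v_{primal}$. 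Dually, $\sum_i y_iA_i' - C' = \big(\sum_i y_iA_i - C\big)'$ lies in $M_{2n}(\bb R)^+$ iff $\sum_i y_iA_i - C\in M_n^+$, and (\ref{eq:complex-to-real-dual}) is the standard-form dual of (\ref{eq:complex-to-real}) (the matrices $\sum p_{ij}F_{ij} + \sum q_{ij}G_{ij}$ exhausting the matrices $Z$ of the stated shape), so, as already noted in the discussion preceding the statement, the optimal value of (\ref{eq:complex-to-real-dual}) is $2v_{dual}$. Now assume there is an invertible $X\in M_n^+$ with $A_i\bullet X = b_i$ for all $i$. Then $Y := X'$ is positive definite in $M_{2n}(\bb R)$ and satisfies all the equality constraints of (\ref{eq:complex-to-real}), so (\ref{eq:complex-to-real}) is strictly feasible; by the strong duality theorem for real semidefinite programs \cite{GartnerMatousek2012, Lovasz2003}, the optimal values of (\ref{eq:complex-to-real}) and (\ref{eq:complex-to-real-dual}) agree, and dividing by $2$ yields $v_{primal} = v_{dual}$.

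The bulk of this is bookkeeping already carried out in the paragraphs preceding the statement (and in \cite{GoemansWilliamson2004}): the scaling identity $A'\bullet B' = 2(A\bullet B)$, the fact that the $F_{ij},G_{ij}$ constraints cut $M_{2n}(\bb R)$ down exactly to realified hermitian matrices, and the passage to the real dual and back. The one substantive external ingredient is Slater's condition for real SDPs, and this is precisely why the hypothesis demands an invertible (positive definite) primal solution rather than merely a feasible one.
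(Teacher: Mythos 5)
Your proof is correct and follows exactly the route the paper intends: the paper itself offers no written proof of this proposition, instead setting up the realification dictionary (\ref{eq:complex-to-real})--(\ref{eq:complex-to-real-dual}) in the preceding paragraphs and deferring to \cite{GartnerMatousek2012, Lovasz2003} for real SDP duality, which is precisely the reduction you carry out. Your contribution is simply to make explicit the bookkeeping (the factor-of-$2$ scaling, the fact that the $F_{ij},G_{ij}$ constraints cut the real feasible set down to realified hermitian matrices, and the direct weak-duality computation), all of which checks out.
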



\section{An Inner Product on the Space of Quantum Operations}

\begin{defn}\label{def:F-norm} We define an inner product on $\cc L(M_n)$ by 
\begin{equation}\label{def:K-inner-product}
    \begin{split}
        \ip{\Phi}{\Psi}_K :&= \sum_{i,j} \tr(\Phi(E_{ii})\Psi(E_{jj})^*) + \sum_{i\not=j} \tr(\Phi(E_{ij})\Psi(E_{ij})^*)\\
        &= \tr(\Phi(I_n)\Psi(I_n)^*) + \sum_{i\not=j} \tr(\Phi(E_{ij})\Psi(E_{ij})^*).
    \end{split}
\end{equation}
\end{defn}

For two maps $\Phi, \Psi: M_n\to M_n$ we write $\Phi\prec\prec \Psi$ if $\Phi(x)\preceq\Psi(x)$ for all $x$ positive semidefinite, i.e., if $\Psi - \Phi$ is a positive map.

\begin{lem}\label{lem:F-decreasing} Let $\Phi,\Psi\in \cc P(M_n)$ be positive. If $\Phi\prec\prec \Psi$, then $\|\Phi\|_K\leq \|\Psi\|_K$.
\end{lem}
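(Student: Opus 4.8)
The plan is to reduce the statement to a monotonicity property of the trace pairing against positive semidefinite matrices. Observe first that the inner product $\ip{\cdot}{\cdot}_K$ can be rewritten entirely in terms of the Choi matrix, or more directly: each summand $\tr(\Phi(E_{ij})\Psi(E_{ij})^*)$ in \eqref{def:K-inner-product} is a trace pairing of two matrices, and when $\Phi = \Psi$ we get $\|\Phi\|_K^2 = \tr(\Phi(I_n)\Phi(I_n)^*) + \sum_{i\neq j}\tr(\Phi(E_{ij})\Phi(E_{ij})^*) = \|\Phi(I_n)\|_2^2 + \sum_{i\neq j}\|\Phi(E_{ij})\|_2^2$. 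The terms indexed by $i\neq j$ are problematic because $E_{ij}$ is not positive semidefinite, so positivity of $\Phi$ and $\Psi$ says nothing directly about $\Phi(E_{ij})$ or $\Psi(E_{ij})$. The first key step is therefore to replace the off-diagonal basis elements $E_{ij}$ by a basis of \emph{positive semidefinite} matrices on which $\prec\prec$ can be exploited, and to check that $\|\cdot\|_K$ is still controlled.

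The second step is the heart of the argument: I would expand $\|\cdot\|_K$ in a suitable positive basis. A natural choice is the matrices $E_{ii}$ together with $\tfrac12(E_{ii}+E_{jj}+E_{ij}+E_{ji})$ and $\tfrac12(E_{ii}+E_{jj}+iE_{ij}-iE_{ji})$ for $i<j$ — all rank-one projections, hence positive semidefinite. Rewriting \eqref{def:K-inner-product} so that each trace term is of the form $\tr(\Phi(P)\Phi(P)^*)$ for $P\succeq 0$ (up to nonnegative coefficients and manifestly nonnegative cross terms that cancel), one reduces to showing: if $0 \prec\prec \Phi \prec\prec \Psi$ then $\tr(\Phi(P)\Phi(P)^*) \le \tr(\Psi(P)\Psi(P)^*)$ for every $P\succeq 0$. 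But $\Phi(P)$ and $\Psi(P)$ are positive semidefinite with $\Phi(P)\preceq \Psi(P)$, and for positive semidefinite matrices $0\preceq A\preceq B$ one has $\tr(A^2)=\|A\|_2^2 \le \|B\|_2^2 = \tr(B^2)$: indeed $\tr(B^2)-\tr(A^2) = \tr((B-A)B) + \tr(A(B-A)) \ge 0$ since each term is the trace of a product of two positive semidefinite matrices. This is the elementary fact that makes everything work, and I would state it as a small sublemma.

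The main obstacle I anticipate is purely bookkeeping: verifying that when $\|\cdot\|_K$ is re-expanded in the positive basis above, all the cross terms that appear either cancel or are themselves monotone (traces of products of positive semidefinite matrices), so that the whole expression is a nonnegative combination of quantities of the form $\tr(\Phi(P)\Phi(P)^*)$ with $P\succeq 0$. Concretely, expanding $\tr\bigl(\Phi(E_{ii}+E_{jj}+E_{ij}+E_{ji})\,\Phi(E_{ii}+E_{jj}+E_{ij}+E_{ji})^*\bigr)$ produces the desired $\tr(\Phi(E_{ij})\Phi(E_{ij})^*)$ term plus terms like $\tr(\Phi(E_{ii})\Phi(E_{jj})^*)$ and $\tr(\Phi(E_{ii})\Phi(E_{ij})^*)$; the former are handled by the diagonal part of \eqref{def:K-inner-product}, while the latter must be shown to drop out by symmetry (pairing the $+E_{ij}$ combination with the $-E_{ij}$ combination, or averaging the two complex rotations). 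Once the identity $\|\Phi\|_K^2 = \sum_\alpha c_\alpha \tr(\Phi(P_\alpha)\Phi(P_\alpha)^*)$ with $c_\alpha\ge 0$ and $P_\alpha\succeq 0$ is established, the sublemma applied termwise gives $\|\Phi\|_K^2 \le \|\Psi\|_K^2$, and taking square roots finishes the proof.
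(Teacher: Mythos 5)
Your first step --- the sublemma that $0\preceq A\preceq B$ implies $\tr(A^2)\leq\tr(B^2)$ via $\tr(B^2)-\tr(A^2)=\tr((B-A)B)+\tr(A(B-A))\geq 0$ --- is correct and is exactly the monotonicity the paper's proof uses (there in the form $\tr(\Phi(x)\Phi(x))\leq\tr(\Phi(x)\Psi(x))\leq\tr(\Psi(x)\Psi(x))$ for $x\succeq 0$). The gap is in the second step, and it is not ``purely bookkeeping'': the decomposition into two-coordinate rank-one projections does not produce nonnegative coefficients. If for fixed $i<j$ you sum $\tr(\Phi(P)\Psi(P)^*)$ over the four projections $\tfrac12(E_{ii}+E_{jj}\pm E_{ij}\pm E_{ji})$ and $\tfrac12(E_{ii}+E_{jj}\pm iE_{ij}\mp iE_{ji})$, the cross terms such as $\tr(\Phi(E_{ii})\Psi(E_{ij})^*)$ and $\tr(\Phi(E_{ij})\Psi(E_{ji})^*)$ do cancel as you hoped, and you recover $\tr(\Phi(E_{ij})\Psi(E_{ij})^*)+\tr(\Phi(E_{ji})\Psi(E_{ji})^*)$ and the cross-diagonal terms $\tr(\Phi(E_{ii})\Psi(E_{jj})^*)+\tr(\Phi(E_{jj})\Psi(E_{ii})^*)$ each with coefficient $1$ --- but you also pick up $\tr(\Phi(E_{ii})\Psi(E_{ii})^*)+\tr(\Phi(E_{jj})\Psi(E_{jj})^*)$ with coefficient $1$. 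Summing over all $\binom{n}{2}$ pairs therefore counts each term $\tr(\Phi(E_{ii})\Psi(E_{ii})^*)$ exactly $n-1$ times, whereas $\ip{\Phi}{\Psi}_K$ contains it once. Matching the off-diagonal and cross-diagonal terms forces coefficient $1$ on every pair, so the residual diagonal correction has coefficient $2-n<0$ for $n\geq 3$, and your claimed identity $\|\Phi\|_K^2=\sum_\alpha c_\alpha\tr(\Phi(P_\alpha)\Phi(P_\alpha)^*)$ with $c_\alpha\geq 0$ fails; a negative coefficient on $\tr(\Phi(E_{ii})\Phi(E_{ii})^*)$ pushes the termwise inequality in the wrong direction and cannot simply be absorbed.

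The repair is to use rank-one matrices $\bd v\otimes\bar{\bd v}$ with \emph{all} entries of $\bd v$ unimodular, rather than projections supported on two coordinates: this is what dilutes the diagonal over-counting. The paper averages over $\bd v\in\bb T^n$ with independent uniform entries; the moment identity $\bb E\, v_i\bar v_j\bar v_k v_l=\max\{\de_{i,k}\de_{j,l},\de_{i,l}\de_{j,k}\}$ (note the $\max$, not the sum, which is what gives each $\tr(\Phi(E_{ii})\Psi(E_{ii})^*)$ coefficient exactly $1$ since $\bb E|v_i|^4=1$) yields $\bb E\,\tr(\Phi(\bd v\otimes\bar{\bd v})\Psi(\bd v\otimes\bar{\bd v})^*)=\ip{\Phi}{\Psi}_K$, and then your sublemma applied pointwise and averaged finishes the proof. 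If you prefer a finite decomposition in the spirit of your proposal, averaging over $\bd v\in\{1,i,-1,-i\}^n$ works equally well, since only fourth moments of the entries enter.
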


\begin{proof}
 Let $x,y\succeq 0$ be positive matrices. Since $\Phi\prec\prec \Psi$ we have that $\tr(\Phi(x)y)\leq \tr(\Psi(x)y)$. Now using that $\Phi(x)$ and $\Psi(x)$ are positive with $\Phi(x)\preceq \Psi(x)$, we have that \[\tr(\Phi(x)\Phi(x))\leq \tr(\Phi(x)\Psi(x))\leq \tr(\Psi(x)\Psi(x)).\] 
 
Let $\bd v = (v_1,\dotsc,v_n)$ be a random vector where each entry $v_i\in \bb T$ is chosen independently is distributed uniformly with regard to Lebesgue (probability) measure. In this way $\bd x := \bd v\otimes \bar{\bd v}$ is a random positive semidefinite matrix. Observing
    \begin{equation}
        \bb E_{\bd v} v_iv_j\bar v_k \bar v_l = \max\{\de_{i,k}\de_{j,l}, \de_{i,l}\de_{j,k}\},
    \end{equation}
    we have that 
    \begin{equation}\label{eq:F-norm-expectation}
        \begin{aligned}
            \bb E_{\bd x} \tr(\Phi(\bd x)\Psi(\bd x)^*) &=\bb E_{\bd v} \tr(\Phi(\bd v \otimes \bar{\bd v})\Psi(\bd v \otimes \bar{\bd v})^*)\\
            &= \bb E_{\bd v}\sum_{i,j,k,l} v_i\bar v_j\bar v_k v_l\tr(\Phi(E_{ij})\Psi(E_{kl})^*)\\
            &= \sum_{i,j,k,l} \bb E_{\bd v}(v_i\bar v_j\bar v_k v_l)\tr(\Phi(E_{ij})\Psi(E_{kl})^*)\\
            &= \sum_{i,k} \tr(\Phi(E_{ii})\Psi(E_{kk})^*)
            + \sum_{i\not=j} \tr(\Phi(E_{ij})\Psi(E_{ij})^*)= \ip{\Phi}{\Psi}_K.
        \end{aligned}
    \end{equation}
It follows that $\|\Phi\|_K\leq \|\Psi\|_K$ by averaging. \qedhere
\end{proof}

\begin{notation}
    For $A,B\in M_n$, let $A\circ B$ denote the entrywise (Schur) product of the matrices, i.e., $(A\circ B)_{ij} = A_{ij}B_{ij}$.
\end{notation}

\begin{remark}\label{rmk:schur}
    If $\Phi(X) = A\circ X$ and $\Psi(X) = B\circ X$, then $\Phi(E_{ii})\Psi(E_{jj})=0$ unless $i=j$. Hence,
\begin{equation}\label{F-norm-hilbert-schmidt}
    \ip{\Phi}{\Psi}_K = \sum_i A_{ii}\overline{ B_{ii}} + \sum_{i\not=j} A_{ij}\overline{ B_{ij}} =\tr(B^*A) = \ip{A}{B}_2.
\end{equation}
If we have that $\Phi$ and $\Psi$ are unital, then we have that 
\begin{equation}\label{eq:K-ip-unital}
    \ip{\Phi}{\Psi}_K = n + \sum_{i\not= j} \tr(\Phi(E_{ij})\Psi(E_{ij})^*).
\end{equation}
\end{remark}

\begin{defn}\label{defn:Phi-prime}
    For $\Phi\in \cc L(M_n)$, we define $\Phi'\in \cc L(M_n)$ by 
    \begin{equation}
        \begin{aligned}
            & \Phi'(E_{ii}) = \Phi(E_{ii})\circ I_n && i=1,\dotsc,n\\
            & \Phi'(E_{ij}) := \Phi(E_{ij})_{ij} E_{ij} && \text{if}\ i\not= j.
        \end{aligned}
    \end{equation}
\end{defn}

\begin{lem}\label{lem:F-norm-averaging}
    We have that $\|\Phi'\|_K\leq \|\Phi\|_K$. Moreover, $\Phi'$ is positive, completely positive, unital, or trace-preserving if $\Phi$ is.
\end{lem}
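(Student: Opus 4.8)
The plan is to recognize $\Phi'$ as an average of $\Phi$ over conjugations by diagonal unitaries, which makes the four structural properties transparent and also controls the $K$-norm. For $\bd v = (v_1,\dots,v_n)\in\bb T^n$ I would set $U_{\bd v} := \diag(v_1,\dots,v_n)$ and define $\Phi_{\bd v} := \Ad_{U_{\bd v}^*}\circ\Phi\circ\Ad_{U_{\bd v}}\in\cc L(M_n)$, so that $\Phi_{\bd v}(X) = U_{\bd v}^*\Phi(U_{\bd v}XU_{\bd v}^*)U_{\bd v}$. Using $U_{\bd v}E_{ij}U_{\bd v}^* = v_i\bar v_j E_{ij}$ and the fact that conjugating a matrix by $U_{\bd v}^*$ scales its $(k,l)$ entry by $\bar v_k v_l$, one gets $\Phi_{\bd v}(E_{ij})_{kl} = v_i\bar v_j\bar v_k v_l\,\Phi(E_{ij})_{kl}$. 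Averaging over $\bd v$ distributed uniformly on $\bb T^n$ and invoking $\bb E_{\bd v}(\bar v_k v_l) = \delta_{kl}$ together with the fourth-moment identity already used in the proof of Lemma \ref{lem:F-decreasing}, a short check on matrix units should give $\bb E_{\bd v}\Phi_{\bd v}(E_{ii}) = \Phi(E_{ii})\circ I_n$ and $\bb E_{\bd v}\Phi_{\bd v}(E_{ij}) = \Phi(E_{ij})_{ij}E_{ij}$ for $i\ne j$; that is, $\Phi' = \bb E_{\bd v}\,\Phi_{\bd v}$.

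Granting this identification, the structural claims follow at once. Conjugation by a unitary is unital, trace-preserving, and completely positive, hence so are $\Ad_{U_{\bd v}}$ and $\Ad_{U_{\bd v}^*}$; composing, each $\Phi_{\bd v}$ is positive (resp.\ completely positive, unital, trace-preserving) whenever $\Phi$ is. Each of these four properties survives the convex average $\bb E_{\bd v}(-)$, since $\cc P(M_n)$ and $\cp(M_n)$ are closed convex cones and unitality and trace-preservation are affine constraints preserved under averaging, so $\Phi'$ inherits them. For the norm bound the quickest route is a direct computation rather than an appeal to Lemma \ref{lem:F-decreasing}: expanding \eqref{def:K-inner-product} gives $\ip{\Phi}{\Phi}_K = \|\Phi(I_n)\|_2^2 + \sum_{i\ne j}\|\Phi(E_{ij})\|_2^2$, and since $\Phi'(I_n) = \Phi(I_n)\circ I_n$ is just the diagonal part of $\Phi(I_n)$ while $\Phi'(E_{ij}) = \Phi(E_{ij})_{ij}E_{ij}$ retains a single entry of $\Phi(E_{ij})$, every term on the right is not increased when $\Phi$ is replaced by $\Phi'$; hence $\|\Phi'\|_K\le\|\Phi\|_K$ termwise. (One could instead note that $\|\Phi_{\bd v}\|_K = \|\Phi\|_K$ for each $\bd v$, since unitary conjugation is $\|\cdot\|_2$-isometric and the scalar prefactors appearing in \eqref{def:K-inner-product} are all $1$, and then conclude by convexity of $\|\cdot\|_K$.)

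I expect the only genuinely non-routine point to be the identification $\Phi' = \bb E_{\bd v}\Phi_{\bd v}$; the moment bookkeeping behind it is precisely what appears in Lemma \ref{lem:F-decreasing}, so there is essentially nothing new to prove there, and everything downstream is formal. Concretely I would organize the write-up as: (1) introduce the diagonal twists $\Phi_{\bd v}$ and prove $\Phi' = \bb E_{\bd v}\Phi_{\bd v}$; (2) read off positivity, complete positivity, unitality, and trace-preservation from the corresponding properties of $\Ad_{U_{\bd v}}$ and closedness/convexity of the relevant cones; (3) close with the termwise estimate for $\|\cdot\|_K$.
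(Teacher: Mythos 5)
Your proposal is correct and follows essentially the same route as the paper: both identify $\Phi'$ as the average $\bb E_{\bd U}\,\Phi^{\bd U}$ over conjugations by random diagonal unitaries, from which the structural properties and the norm inequality follow. Your primary argument for $\|\Phi'\|_K\le\|\Phi\|_K$ (the termwise comparison using $\|\Phi\|_K^2=\|\Phi(I_n)\|_2^2+\sum_{i\ne j}\|\Phi(E_{ij})\|_2^2$) is a valid, slightly more elementary alternative to the paper's invariance-plus-convexity argument, which you also correctly sketch as your parenthetical second route.
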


\begin{proof}
    Let $\bd v$ be the random vector as in the proof of Lemma \ref{lem:F-decreasing}. If Q is either a diagonal unitary matrix or a permutation matrix, we see that $Q\bd v$ and $\bd v$ are identically distributed random vectors. Setting $\Phi^Q(x) := Q^*\Phi(QxQ^*)Q$, we see from equation (\ref{eq:F-norm-expectation}) that
    \begin{equation}\label{eq:F-norm-invariance}
        \|\Phi^Q\|_K = \|\Phi\|_K.
    \end{equation}
    Let $\cc U_n$ be the group of all unitary diagonal matrices in $M_n$ equipped with Haar (probability) measure. For $\bd U := \diag(\bd v)\in \cc U_n$ a Haar-uniformly distributed random variable we have that, similarly to equation (\ref{eq:F-norm-expectation}),
    \begin{equation}\label{eq:prime-averaging}
        \bb E_{\bd U}\, \Phi^{\bd U} = \Phi'.
    \end{equation}
    Indeed,
    \begin{equation}
        \bb E_{\bd U}\, U^*\Phi(UE_{ij}U^*)U = \bb E_{\bd v}\, v_i\bar v_j\sum_{k,l}\bar v_k v_l\Phi(E_{ij})_{kl} E_{kl}.
    \end{equation}
    From this it follows that if $i\not= j$, then $i=k$ and $j=l$. If $i=j$, then the sum reduces to summing over all $k=l$.
    It then follows by convexity of the norm that $\|\Phi'\|_K\leq \|\Phi\|_K$.
\end{proof}

The following lemma is a variant of the previous, and seems well-known. For the sake of convenience we reproduce here the treatment in \cite{AGS}.

\begin{lem}\label{lem:choi-expectation}
    For $\Phi\in\cp(M_n)$, a completely positive map, the matrix $A_\Phi$ defined by $[A_\Phi]_{ij} := \Phi(E_{ij})_{ij}$ is positive semidefinite. Moreover, $\max_i [A_\Phi]_{ii} \leq \|\Phi(1)\|$. 
\end{lem}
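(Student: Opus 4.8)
The plan is to realize the matrix $A_\Phi$ as a compression of the Choi matrix $\Ch(\Phi)$, which is positive semidefinite by Choi's Theorem, and then read off both claims from that. Recall $\Ch(\Phi) = \sum_{i,j} E_{ij}\otimes \Phi(E_{ij})\in M_n\otimes M_n$, which we view as an $n\times n$ block matrix whose $(i,j)$-block is $\Phi(E_{ij})\in M_n$. The key observation is that the $(i,j)$-entry of the matrix $A_\Phi$, namely $\Phi(E_{ij})_{ij}$, is precisely the $(j,j)$-entry of the $(i,j)$-block of $\Ch(\Phi)$. Concretely, let $\{e_1,\dots,e_n\}$ be the standard basis of $\mathbb C^n$ and set $\xi_i := e_i\otimes e_i\in \mathbb C^n\otimes\mathbb C^n$. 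Then a direct computation gives $\langle \Ch(\Phi)\,\xi_j,\ \xi_i\rangle = \langle \Phi(E_{ij}) e_j, e_i\rangle = \Phi(E_{ij})_{ij} = [A_\Phi]_{ij}$. Hence $A_\Phi = V^*\Ch(\Phi) V$ where $V: \mathbb C^n\to \mathbb C^n\otimes\mathbb C^n$ is the isometry sending $e_i\mapsto \xi_i = e_i\otimes e_i$.

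First I would verify this entrywise identity carefully (it is just unwinding the definition of $\Ch$ and of matrix entries), and note that $\{\xi_1,\dots,\xi_n\}$ is an orthonormal set, so $V$ is indeed an isometry. Since $\Phi$ is completely positive, Choi's Theorem gives $\Ch(\Phi)\in (M_n\otimes M_n)^+$, and therefore $A_\Phi = V^*\Ch(\Phi)V$ is positive semidefinite, being a compression of a positive semidefinite matrix. This proves the first assertion.

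For the moreover clause, I would bound the diagonal entries $[A_\Phi]_{ii} = \Phi(E_{ii})_{ii} = \langle \Phi(E_{ii}) e_i, e_i\rangle$. Since $\Phi$ is completely positive it is in particular positive, so $0\preceq E_{ii}\preceq I_n$ implies $0\preceq \Phi(E_{ii})\preceq \Phi(I_n)$; consequently $0\leq \langle \Phi(E_{ii}) e_i,e_i\rangle \leq \langle \Phi(I_n) e_i, e_i\rangle \leq \|\Phi(I_n)\|$, using that $\Phi(I_n)$ is positive semidefinite. Taking the maximum over $i$ gives $\max_i [A_\Phi]_{ii}\leq \|\Phi(1)\|$, as claimed.

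I do not anticipate a serious obstacle here; the only point requiring a little care is getting the isometry $V$ and the entrywise matching between $A_\Phi$ and the compression of $\Ch(\Phi)$ exactly right (in particular keeping track of which index labels the block and which labels the entry within the block, and the conjugation conventions in the Hilbert--Schmidt/Choi correspondence). Once that bookkeeping is settled, both statements are immediate consequences of Choi's Theorem and basic positivity estimates.
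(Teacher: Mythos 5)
Your proof is correct, and for the main positivity claim it takes a genuinely different (if closely related) route from the paper's. You exhibit $A_\Phi$ directly as the compression $V^*\Ch(\Phi)V$ of the Choi matrix along the isometry $V\colon e_i\mapsto e_i\otimes e_i$; your entrywise identity $\langle \Ch(\Phi)(e_j\otimes e_j),\, e_i\otimes e_i\rangle = \langle\Phi(E_{ij})e_j,e_i\rangle = \Phi(E_{ij})_{ij}$ is right, and positive semidefiniteness of $A_\Phi$ then falls out of Choi's Theorem in one line. The paper instead argues by trace duality: it pairs $A_\Phi$ against an arbitrary positive semidefinite $B$ and shows $\tr(A_\Phi B)=\tr\bigl((\Ch(\Phi)\circ\Delta(B))\Delta(J_n)\bigr)\geq 0$, where $\Delta\colon E_{ij}\mapsto E_{ij}\otimes E_{ij}$ is a (non-unital) $\ast$-embedding, so that $\Delta(B)$ is positive semidefinite; this invokes the Schur product theorem and the self-polarity of the positive semidefinite cone in addition to Choi's Theorem. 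Your congruence argument is more economical and makes the structural reason for positivity transparent (it also immediately re-derives the paper's identity, since $\Delta(J_n)$ is the rank-one matrix built from $\sum_i e_i\otimes e_i$, which spans the range of $V$), while the paper's version is phrased to match its running theme of trace duality and Schur multipliers. The \emph{moreover} clause is handled essentially identically in both proofs, via $0\preceq \Phi(E_{ii})\preceq\Phi(1)$ and the resulting bound on the diagonal entries.
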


\begin{proof}
    The map $\Delta: E_{ij}\mapsto E_{ij}\otimes E_{ij}$ induces a (non-unital) $\ast$-embedding of $M_n$ into $M_n\otimes M_n$. This implies that $\Delta(B) = \sum_{ij} B_{ij}E_{ij}\otimes E_{ij}$ is positive semidefinite for all $B\in M_n$ positive semidefinite. Letting $J_n$ denote the $n\times n$ matrix with all entries $1$, we see that
    \begin{equation*}
        \tr(A_{\Phi}B) = \tr((\Ch(\Phi)\circ \Delta(B))\Delta(J_n)) \geq 0
    \end{equation*}
    for all $B\in M_n$ positive semidefinite; thus, $A_\Phi$ is positive semidefinite. The second assertion follows since $(A_\Phi)_{ii} = \Phi(E_{ii})_{ii} \leq \|\Phi(E_{ii})\|\leq \|\Phi(1)\|$. \qedhere
\end{proof}

\begin{cor}\label{cor:improved-expectation}
    For $\Phi\in\cp(M_n)$ define the matrix $B_\Phi$ by 
    \begin{equation*}
        \begin{aligned}
            & [B_\Phi]_{ii} = \|\Phi(1)\| && i=1,\dotsc, n\\
            & [B_\Phi]_{ij} = \Phi(E_{ij})_{ij} && i\not= j.
        \end{aligned}
    \end{equation*}
    We have that $B_\Phi$ is positive semidefinite.
\end{cor}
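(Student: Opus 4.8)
The plan is to deduce this directly from Lemma \ref{lem:choi-expectation} by writing $B_\Phi$ as a sum of two manifestly positive semidefinite matrices. Recall that Lemma \ref{lem:choi-expectation} asserts two things: the matrix $A_\Phi$ with entries $[A_\Phi]_{ij} = \Phi(E_{ij})_{ij}$ (for \emph{all} $i,j$, including $i=j$) is positive semidefinite, and its diagonal entries satisfy $[A_\Phi]_{ii} = \Phi(E_{ii})_{ii}\leq \|\Phi(1)\|$.

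First I would observe that $B_\Phi$ and $A_\Phi$ have identical off-diagonal entries, so their difference $D := B_\Phi - A_\Phi$ is the diagonal matrix with $D_{ii} = \|\Phi(1)\| - [A_\Phi]_{ii}$. By the second assertion of Lemma \ref{lem:choi-expectation}, each $D_{ii}\geq 0$, so $D\in M_n^+$. Then $B_\Phi = A_\Phi + D$ is a sum of two elements of the cone $M_n^+$, hence $B_\Phi\in M_n^+$. This completes the argument.

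There is essentially no obstacle here; the only point that requires the hypothesis $\Phi\in\cp(M_n)$ (as opposed to merely positive) is the invocation of Lemma \ref{lem:choi-expectation}, which in turn relies on Choi's Theorem to know that $\Ch(\Phi)$ is positive semidefinite. One could alternatively phrase the proof as: for any vector $\xi\in\bb C^n$, $\xi^* B_\Phi\,\xi = \xi^* A_\Phi\,\xi + \sum_i (\|\Phi(1)\| - [A_\Phi]_{ii})|\xi_i|^2 \geq 0$, but the ``sum of two PSD matrices'' formulation is cleaner and avoids writing out the quadratic form.

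\begin{proof}
By Lemma \ref{lem:choi-expectation}, the matrix $A_\Phi$ with $[A_\Phi]_{ij} = \Phi(E_{ij})_{ij}$ is positive semidefinite and satisfies $[A_\Phi]_{ii}\leq \|\Phi(1)\|$ for each $i$. Since $B_\Phi$ and $A_\Phi$ agree off the diagonal, the difference $D := B_\Phi - A_\Phi = \diag\big(\|\Phi(1)\| - [A_\Phi]_{11},\dotsc,\|\Phi(1)\| - [A_\Phi]_{nn}\big)$ is a diagonal matrix with nonnegative entries, hence $D\in M_n^+$. Therefore $B_\Phi = A_\Phi + D\in M_n^+$, as the sum of two positive semidefinite matrices.
\end{proof}
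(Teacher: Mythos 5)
Your proof is correct and is essentially identical to the paper's own argument: both decompose $B_\Phi = A_\Phi + (B_\Phi - A_\Phi)$, invoke Lemma \ref{lem:choi-expectation} for the positive semidefiniteness of $A_\Phi$ and the bound $[A_\Phi]_{ii}\leq\|\Phi(1)\|$, and conclude since the difference is a nonnegative diagonal matrix. No changes needed.
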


\begin{proof}
    By the proof of Lemma \ref{lem:choi-expectation}, we have that $A_\Phi$ is positive semidefinite, and $B_\Phi-A_\Phi$ is a diagonal matrix with non-negative entries, and the result follows. \qedhere
\end{proof}

\begin{cor}\label{cor:norm-ineq}
    Suppose that $\Phi,\Psi\in\cc L(M_n)$ are both completely positive and unital. We have that 
    \begin{equation}
        \|B_\Phi - B_\Psi\|_2\leq \|\Phi - \Psi\|_K.
    \end{equation}
\end{cor}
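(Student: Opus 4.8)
The plan is to reduce both sides to explicit sums over the off-diagonal index pairs $i\neq j$ and then compare term by term. First I would exploit unitality: since $\Phi(I_n)=\Psi(I_n)=I_n$, the difference $\Theta:=\Phi-\Psi$ satisfies $\Theta(I_n)=0$. Substituting $\Theta$ into the defining formula \eqref{def:K-inner-product} for $\ip{\cdot}{\cdot}_K$ (equivalently, using the unital form \eqref{eq:K-ip-unital}), the ``$I_n$-block'' $\tr(\Theta(I_n)\Theta(I_n)^*)$ drops out, so
\[
\norm{\Phi-\Psi}_K^2 \;=\; \sum_{i\neq j}\tr\bigl((\Phi-\Psi)(E_{ij})(\Phi-\Psi)(E_{ij})^*\bigr)\;=\;\sum_{i\neq j}\norm{\Phi(E_{ij})-\Psi(E_{ij})}_2^2 .
\]

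Next I would compute the left-hand side. Again by unitality, $\norm{\Phi(1)}=\norm{\Psi(1)}=1$, so $B_\Phi$ and $B_\Psi$ have identical diagonals; hence $B_\Phi-B_\Psi$ vanishes on the diagonal and has $(i,j)$ entry $\Phi(E_{ij})_{ij}-\Psi(E_{ij})_{ij}$ for $i\neq j$. Therefore
\[
\norm{B_\Phi-B_\Psi}_2^2 \;=\; \sum_{i\neq j}\ab{\Phi(E_{ij})_{ij}-\Psi(E_{ij})_{ij}}^2 .
\]
For each fixed pair $i\neq j$, the scalar $\Phi(E_{ij})_{ij}-\Psi(E_{ij})_{ij}$ is a single entry of the matrix $\Phi(E_{ij})-\Psi(E_{ij})$, so its modulus squared is at most the full Hilbert--Schmidt norm squared $\norm{\Phi(E_{ij})-\Psi(E_{ij})}_2^2$. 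Summing this inequality over all $i\neq j$ and comparing with the two displayed identities yields $\norm{B_\Phi-B_\Psi}_2^2\le\norm{\Phi-\Psi}_K^2$, which is the claim.

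I do not expect a genuine obstacle here; the only points requiring care are the two uses of unitality (to annihilate the $\Theta(I_n)$ contribution to $\norm{\Phi-\Psi}_K$, and to make the diagonals of $B_\Phi$ and $B_\Psi$ agree so that only off-diagonal entries survive in $B_\Phi-B_\Psi$). It is worth remarking that complete positivity is not actually invoked in this argument — it entered earlier only to guarantee that $B_\Phi$ is itself positive semidefinite (Corollary \ref{cor:improved-expectation}), which plays no role in the norm comparison.
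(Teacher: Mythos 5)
Your proof is correct, and it reaches the conclusion by a more elementary route than the paper. Both arguments begin the same way: unitality kills the $\Phi(I_n)-\Psi(I_n)$ block of $\|\Phi-\Psi\|_K^2$ and forces the diagonals of $B_\Phi$ and $B_\Psi$ to agree, so everything reduces to the off-diagonal index pairs. Where you diverge is in how you bound the off-diagonal contribution: you simply observe that $\ab{\Phi(E_{ij})_{ij}-\Psi(E_{ij})_{ij}}^2$ is the modulus squared of a single entry of $\Phi(E_{ij})-\Psi(E_{ij})$ and hence is dominated by $\norm{\Phi(E_{ij})-\Psi(E_{ij})}_2^2$, then sum over $i\neq j$. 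The paper instead applies the map $\Phi\mapsto\Phi'$ of Definition \ref{defn:Phi-prime} to $\Phi-\Psi$, identifies $\|\Phi'-\Psi'\|_K$ with $\|B_\Phi-B_\Psi\|_2$ via Remark \ref{rmk:schur}, and invokes Lemma \ref{lem:F-norm-averaging} (whose proof averages over random diagonal unitaries and uses convexity of the norm) to get $\|\Phi'-\Psi'\|_K\le\|\Phi-\Psi\|_K$. Your termwise estimate is exactly the content of that lemma in the special case of a map annihilating $I_n$, obtained without any averaging machinery; what the paper's route buys is reuse of a lemma it needs elsewhere anyway, while yours is self-contained and makes transparent that neither complete positivity nor positivity is actually used in the inequality (they enter only to make $B_\Phi$ positive semidefinite, which is irrelevant here). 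Your observation on that last point is accurate.
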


\begin{proof}
    Since $\Phi,\Psi\in\cc L(M_n)$ are unital, we have that $[B_\Phi - B_\Psi]_{ii} = 0$. By (\ref{F-norm-hilbert-schmidt}) we have under these assumptions that $\|\Phi' - \Psi'\|_K = \|B_\Phi-B_\Psi\|_2$. The result then follows by Lemma \ref{lem:F-norm-averaging}.
\end{proof}

We end this section with one more observation on the $\|\cdot\|_K$-norm, which will not be used in the sequel.

Let $\Phi: M_n\to M_n$ be a map which is self adjoint in the sense that $\Phi(x^*) = \Phi(x)^*$. Define 
\begin{equation*}
    \cp(\Phi) := \{\Psi\in \cp(M_n) : \Phi\prec\prec \Psi\}
\end{equation*}
Note that $\cp(\Phi)$ is a non-empty convex set. By Zorn's lemma and closedness, the set $\cp(\Phi)$ has at least one $\prec\prec$-minimal element.

\begin{prop}
Let $\Phi\in \cc L(M_n)$ be self adjoint. Let $\Psi_\ast$ be the unique element of $\cp(\Phi)$ minimizing $\Psi\mapsto \|\Psi - \Phi\|_K$. Then $\Psi_\ast$ is $\prec\prec$-minimal in $\cp(\Phi)$.
\end{prop}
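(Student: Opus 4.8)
The plan is to argue by contradiction: suppose $\Psi_\ast$ is not $\prec\prec$-minimal in $\cp(\Phi)$, so there exists $\Psi_0 \in \cp(\Phi)$ with $\Psi_0 \prec\prec \Psi_\ast$ and $\Psi_0 \neq \Psi_\ast$. I want to produce from $\Psi_0$ a competitor in $\cp(\Phi)$ that is strictly closer to $\Phi$ in the $\|\cdot\|_K$-norm than $\Psi_\ast$, contradicting uniqueness of the minimizer. The natural candidate is $\Psi_0$ itself, so the heart of the matter is the estimate $\|\Psi_0 - \Phi\|_K \le \|\Psi_\ast - \Phi\|_K$, with equality forcing $\Psi_0 = \Psi_\ast$.

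First I would record that both $\Psi_\ast - \Phi$ and $\Psi_\ast - \Psi_0$ are positive maps (the former since $\Psi_\ast \in \cp(\Phi)$, the latter by assumption), and that $\Psi_0 - \Phi = (\Psi_\ast - \Phi) - (\Psi_\ast - \Psi_0)$ is positive since $\Psi_0 \in \cp(\Phi)$. Thus writing $R := \Psi_\ast - \Phi$ and $S := \Psi_\ast - \Psi_0$, we have $R \succ\!\!\succ 0$, $S \succ\!\!\succ 0$, and $R - S = \Psi_0 - \Phi \succ\!\!\succ 0$, i.e. $0 \prec\!\!\prec S \prec\!\!\prec R$. Then $\Psi_\ast - \Phi = R$ and $\Psi_0 - \Phi = R - S$, and I want $\|R - S\|_K \le \|R\|_K$. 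This is almost Lemma~\ref{lem:F-decreasing}, but that lemma requires both maps to be \emph{positive} as maps (elements of $\cc P(M_n)$), and here $R$ and $R-S$ are positive, so Lemma~\ref{lem:F-decreasing} applied to the pair $(R - S, R)$ (with $R - S \prec\!\!\prec R$) gives exactly $\|R - S\|_K \le \|R\|_K$. Hence $\|\Psi_0 - \Phi\|_K \le \|\Psi_\ast - \Phi\|_K$.

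By uniqueness of the minimizer this forces $\|\Psi_0 - \Phi\|_K = \|\Psi_\ast - \Phi\|_K$, and then uniqueness again gives $\Psi_0 = \Psi_\ast$, contradicting $\Psi_0 \neq \Psi_\ast$. Therefore $\Psi_\ast$ is $\prec\prec$-minimal. I should double-check that $\Psi_0 \in \cp(\Phi)$ is indeed an admissible competitor, i.e. that $\cp(\Phi)$ is exactly the feasible set over which $\|\Psi - \Phi\|_K$ is being minimized, which is the definition given just before the statement; and I should make sure $\Phi$ being self-adjoint is used where needed — it guarantees $\cp(\Phi)$ is nonempty (via the discussion preceding the proposition, so that $\Psi_\ast$ exists) but plays no further role in the comparison argument itself.

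The main obstacle is the hypothesis mismatch in applying Lemma~\ref{lem:F-decreasing}: the lemma is stated for $\Phi, \Psi \in \cc P(M_n)$, and one must be careful that $R = \Psi_\ast - \Phi$ and $R - S = \Psi_0 - \Phi$ are genuinely positive \emph{maps} — this is precisely the content of $\Psi_\ast, \Psi_0 \in \cp(\Phi)$ (completely positive maps dominating $\Phi$ in the $\prec\prec$ order), since $\cp(M_n) \subset \cc P(M_n)$ and $\Phi \prec\prec \Psi$ means $\Psi - \Phi$ is a positive map by definition. Once that bookkeeping is straight, the rest is immediate from Lemma~\ref{lem:F-decreasing} and the uniqueness of the norm-minimizer on a nonempty closed convex set in an inner-product space.
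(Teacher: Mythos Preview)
Your argument is correct and is essentially identical to the paper's proof: both take $\Psi_0\prec\prec\Psi_\ast$ in $\cp(\Phi)$, observe that $\Psi_0-\Phi$ and $\Psi_\ast-\Phi$ are positive maps with $\Psi_0-\Phi\prec\prec\Psi_\ast-\Phi$, apply Lemma~\ref{lem:F-decreasing} to obtain $\|\Psi_0-\Phi\|_K\le\|\Psi_\ast-\Phi\|_K$, and then invoke uniqueness of the minimizer to force $\Psi_0=\Psi_\ast$. Your introduction of the auxiliary maps $R$ and $S$ is notational overhead but changes nothing substantive.
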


\begin{proof}
 Suppose there is $\La\in \cp(\Phi)$ so that $\La\prec\prec \Psi_{\ast}$ We have that $\Psi_\ast - \Phi$ and $\La - \Phi$ are positive and $\La - \Phi\prec\prec \Psi_\ast - \Phi$, hence by Lemma \ref{lem:F-decreasing} we have that $\|\La - \Phi\|_K\leq \|\Psi_\ast - \Phi\|_K$, hence $\La = \Psi_\ast$ by the minimality of $\Psi_\ast$. \qedhere
\end{proof}

\section{Approximating Quantum Operations}

\begin{defn}
    We say that subspace $\cc S\in M_n$ is a \emph{matricial system} if it contains the unit and is closed under taking adjoints.
\end{defn}

Given some matricial system $\mathcal{S}$, we consider the orthogonal projection $P_{\cc S}: M_n\to \cc S$ with respect to the Hilbert--Schmidt inner product. It holds for any matricial system that $P_{\cc S}\in \cc L(M_n)$ is unital and trace-preserving; however, it is rarely the case the $P_{\cc S}$ is positive, let alone completely positive. We seek to approximate $P_\mathcal{S}$ by a quantum operation with certain properties. These properties should be shared with $P_{\cc S}$ and allow us to use
Lemma \ref{lem:F-norm-averaging}, and so we establish our program to be
\begin{equation}\label{eq:min-quantum-operation}
\begin{aligned}
    \Phi_*=\operatorname{argmin}&&&\| \Phi-P_\mathcal{S}\|_K\\
    \text{s.t.}&&&\Phi(M_n)\subset \mathcal{S},\\
    &&&\Phi\textrm{ completely positive},\\
    &&&\Phi\textrm{ unital},\\
    &&&\Phi\textrm{ trace preserving}.
\end{aligned}
\end{equation}
Alternately, we can cast these conditions in terms of the Choi matrix $\Ch(\Phi)$:
\begin{equation}\label{eq:PhiProgram}
\begin{aligned}
    \Phi_*=\operatorname{argmin}&&&\| \Phi-P_\mathcal{S}\|_K\\
    \text{s.t.}&&&\Ch(\Phi)\in M_n\otimes \mathcal{S},\\
    &&&\Ch(\Phi)\in (M_n\otimes M_n)^+,\\
    &&&\tr\otimes\id(\Ch(\Phi))=I_n,\\
    &&&\id\otimes\tr(\Ch(\Phi))=I_n.
\end{aligned}
\end{equation}
An alternative to this objective is to maximize the inner product $\max\langle\Phi,P_\mathcal{S}\rangle_{K}$. Geometrically, inner products measure angles, and so maximizing this inner product can be achieved by reducing the angle between $\Phi$ and $P_\mathcal{S}$.

\begin{defn}
Define $\phi_{\Quad}(\mathcal{S}):= \frac{1}{n}\langle \Phi_*,P_\mathcal{S}\rangle_K$, where $\Phi_*$ is given by the program (\ref{eq:PhiProgram}). Define $\phi_{\Lin}(\mathcal{S})$ to be the same function with the alternative objective function
\begin{equation}\label{eq:PhiLin}
\begin{aligned}
    \phi_{\Lin}(\mathcal{S}):=\max&&&\frac{1}{n}\langle\Phi,P_\mathcal{S}\rangle_K\\
    \text{s.t.}&&&\Ch(\Phi)\in M_n\otimes \mathcal{S},\\
    &&&\Ch(\Phi)\in (M_n\otimes M_n)^+,\\
    &&&\tr\otimes\id(\Ch(\Phi))=I_n,\\
    &&&\id\otimes\tr(\Ch(\Phi))=I_n.
\end{aligned}
\end{equation}
\end{defn}

\begin{prop}\label{eq:phi_linquad_ineq}
For any matricial system $\cc{S}\subset M_n$,
\begin{equation}
    \phi_{\Lin}(\cc{S})\geq \phi_{\Quad}(\cc{S}).
\end{equation}
\end{prop}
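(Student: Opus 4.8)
The plan is to notice that $\phi_{\Quad}(\cc S)$ and $\phi_{\Lin}(\cc S)$ are optimizations over the \emph{same} feasible region, differing only in their objective functions, so the inequality amounts to the remark that $\phi_{\Quad}(\cc S)$ is the value of the $\phi_{\Lin}$-objective evaluated at one particular admissible point. Indeed, the constraints in (\ref{eq:PhiProgram}) and in (\ref{eq:PhiLin}) are the same: in both one ranges over $\Phi\in\cc L(M_n)$ with $\Ch(\Phi)\in M_n\otimes\cc S$, $\Ch(\Phi)\in(M_n\otimes M_n)^+$, $\tr\otimes\id(\Ch(\Phi))=I_n$, and $\id\otimes\tr(\Ch(\Phi))=I_n$. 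Call this common feasible set $\cc F(\cc S)$. The minimizer $\Phi_\ast$ defining $\phi_{\Quad}$ lies in $\cc F(\cc S)$ by construction.

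Before concluding I would dispatch two routine points. First, $\cc F(\cc S)$ is nonempty (e.g.\ the completely depolarizing channel $x\mapsto \tfrac1n\tr(x)I_n$ is unital, trace-preserving, completely positive, with range $\bb C I_n\subseteq\cc S$) as well as compact and convex, so the strictly convex function $\Phi\mapsto\|\Phi-P_{\cc S}\|_K$ attains its minimum at a unique $\Phi_\ast\in\cc F(\cc S)$, as already noted in the introduction. Second, the objective $\Phi\mapsto\langle\Phi,P_{\cc S}\rangle_K$ is real-valued on $\cc F(\cc S)$: any completely positive $\Phi$ and the projection $P_{\cc S}$ both send hermitian matrices to hermitian matrices, so in the expectation identity (\ref{eq:F-norm-expectation}) applied with the random positive matrix $\bd x=\bd v\otimes\bar{\bd v}$ the integrand $\tr(\Phi(\bd x)P_{\cc S}(\bd x)^\ast)$ is the trace of a product of two hermitian matrices, hence real; therefore $\langle\Phi,P_{\cc S}\rangle_K\in\bb R$. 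This makes the maximum in (\ref{eq:PhiLin}) a genuine real optimum and places both invariants on the real line.

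The proof then finishes in one step: since $\Phi_\ast\in\cc F(\cc S)$ is feasible for the program (\ref{eq:PhiLin}), the optimal value of that program is at least the value of its objective at $\Phi_\ast$, i.e.
\begin{equation*}
    \phi_{\Lin}(\cc S)\ \geq\ \tfrac1n\langle\Phi_\ast,P_{\cc S}\rangle_K\ =\ \phi_{\Quad}(\cc S).
\end{equation*}
There is no genuine obstacle here; the only care required is the bookkeeping that the two programs share the feasible set $\cc F(\cc S)$ and that the $\phi_{\Lin}$-objective is a well-defined real quantity, as above. (Equivalently, one may run the argument directly with the constraints ``$\Phi(M_n)\subseteq\cc S$, completely positive, unital, trace-preserving'' of (\ref{eq:min-quantum-operation}), which cut out the same set.)
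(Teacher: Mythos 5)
Your argument is correct and is essentially the paper's own proof: both observe that the two programs share the same feasible set, so the minimizer $\Phi_\ast$ of the $\phi_{\Quad}$ program is feasible for the $\phi_{\Lin}$ program, whence $\phi_{\Lin}(\cc S)\geq \tfrac1n\langle\Phi_\ast,P_{\cc S}\rangle_K=\phi_{\Quad}(\cc S)$. The extra remarks on nonemptiness and real-valuedness are harmless bookkeeping the paper leaves implicit.
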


\begin{proof}
The argument $\Phi_*$ from the program (\ref{eq:PhiProgram}) satisfies all the constraints of the program (\ref{eq:PhiLin}) for $\phi_{\Lin}(\cc{S})$. Thus the value $\phi_{\Quad}(\cc{S})=\frac{1}{n}\langle\Phi_*,P_\mathcal{S} \rangle_K$ is a lower bound for the value $\phi_{\Lin}(\cc{S})$.
\end{proof}


\begin{example} 
    Consider the matricial system of all $n\times n$ matrices with constant diagonal entries, 
    \begin{equation}\label{eq:diag-system}
        \cc S_n=\{X\in M_n : X_{ii}=X_{jj},\, \forall i,j=1,\ldots,n\}.
    \end{equation}
    We see that $P_n$, the orthogonal projection $\cc S_n$, is given by  $X\mapsto \tilde{X}$ where $\tilde{X}_{ij}=X_{ij}$ for $i\neq j$, but $\tilde{X}_{ii}=\tr(X)/n$. From this, we can see that the Choi matrix of $P_n$ is
    $$\Ch(P_n)= \frac{1}{n}I_{n^2}+\sum_{i\neq j}E_{ij}\otimes E_{ij}.$$

    We can view the case of $\cc S_2\subset M_2$ via the Pauli matrices where $M_2 = {\rm span}\{1, \sg_x, \sg_y, \sg_z\}$ and $\cc S_2 = {\rm span}\{1, \sg_x,\sg_y\}$. In this instance $P_2(1)=1$, $P_2(\sg_x) = \sg_x$, $P_2(\sg_y) = \sg_y$, and $P_2(\sg_z)=0$. It can be seen that the projection $P_2$ is positive, but not $2$-positive. There is a minimal unital completely positive map which sits over this projection, in the sense that the difference is positivity preserving, and it is the one that maps $(1,\sg_x,\sg_y,\sg_z)\mapsto (2, \sg_x, \sg_y, 0)$.

\end{example}

The optimal quantum operation for the program (\ref{eq:PhiProgram}) can be computed exactly.
%

\begin{prop}
    Given the matricial system $\cc S_n$ from (\ref{eq:diag-system}), the optimal quantum operation $\Phi_*$ is given by
    \begin{equation}\label{eq:diag-optimal}
        \Phi_*(A)=\frac{\tr(A)}{n}I_{n} + \sum_{i\neq j}\frac{A_{ij}}{n}E_{ij}.
    \end{equation}
\end{prop}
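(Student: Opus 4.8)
The plan is to verify that the proposed map $\Phi_*$ given by (\ref{eq:diag-optimal}) is feasible for the program (\ref{eq:PhiProgram}) and then show it is optimal, either by a direct argument using the structure of the problem or by exhibiting a matching dual certificate. First I would check feasibility: the range of $\Phi_*$ clearly lies in $\cc S_n$ since the diagonal of $\Phi_*(A)$ is the constant $\tr(A)/n$; it is unital since $\Phi_*(I_n) = \tfrac{n}{n}I_n = I_n$; and it is trace preserving since $\tr(\Phi_*(A)) = \tr(A)$ and the off-diagonal terms contribute nothing to the trace. Complete positivity follows from Choi's Theorem by computing
\begin{equation*}
    \Ch(\Phi_*) = \sum_{i}E_{ii}\otimes \tfrac{1}{n}I_n + \sum_{i\neq j}E_{ij}\otimes \tfrac{1}{n}E_{ij} = \tfrac{1}{n}\Big(I_n\otimes I_n - \sum_i E_{ii}\otimes E_{ii}\Big) + \tfrac{1}{n}\sum_{i\neq j}E_{ij}\otimes E_{ij},
\end{equation*}
which I would recognize as $\tfrac{1}{n}$ times the (rank-one) matrix $\sum_{i,j}E_{ij}\otimes E_{ij} = \ket{\Omega}\bra{\Omega}$ for the maximally entangled vector $\Omega = \sum_i e_i\otimes e_i$, plus $\tfrac{1}{n}\sum_i (E_{ii}\otimes I_n - E_{ii}\otimes E_{ii} - E_{ii}\otimes E_{ii} + E_{ii}\otimes E_{ii})$; more carefully, $\Ch(\Phi_*) = \tfrac{1}{n}\ket{\Omega}\bra{\Omega} + \tfrac{1}{n}\sum_i(E_{ii}\otimes I_n - E_{ii}\otimes E_{ii})$, and both summands are positive semidefinite, so $\Ch(\Phi_*)\succeq 0$.

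Next I would establish optimality. The cleanest route uses the averaging results of Section 3 together with symmetry. Observe that $\Phi_*$ is already of the form $\Phi = \Phi'$ in the sense of Definition \ref{defn:Phi-prime} (its action on $E_{ij}$ for $i\neq j$ is a multiple of $E_{ij}$, and on $E_{ii}$ it is a multiple of $I_n$), and that $P_{\cc S_n}$ is likewise invariant under the conjugations $\Phi\mapsto \Phi^Q$ for $Q$ diagonal unitary or a permutation matrix. By Lemma \ref{lem:F-norm-averaging}, replacing any feasible $\Phi$ by $\Phi'$ does not increase $\|\Phi - P_{\cc S_n}\|_K$ and preserves all the constraints (complete positivity, unitality, trace preservation, and range in $\cc S_n$ — the last since $\cc S_n$ is itself invariant under the relevant group). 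Hence the optimizer may be taken of the "prime" form, and one further averages over the permutation group to assume $\Phi(E_{ij}) = c\, E_{ij}$ for a single scalar $c$ independent of $i\neq j$ and $\Phi(E_{ii}) = \tfrac{1}{n}I_n$ (the diagonal part being forced by unitality plus the range constraint $\Phi(E_{ii})\in\cc S_n$ combined with trace preservation).

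Then the problem collapses to a one-parameter optimization: among scalars $c$ such that the resulting $\Phi_c$ is completely positive, minimize $\|\Phi_c - P_{\cc S_n}\|_K$. Using (\ref{F-norm-hilbert-schmidt})-style identities or a direct expansion, $\|\Phi_c - P_{\cc S_n}\|_K^2 = \sum_{i\neq j}|c - 1|^2 = n(n-1)|c-1|^2$ (the diagonal parts agree), so one wants $c$ as close to $1$ as complete positivity permits; by the Choi-matrix computation above, $\Ch(\Phi_c) = \tfrac{1}{n}\sum_i(E_{ii}\otimes I_n - E_{ii}\otimes E_{ii}) + c\sum_{i\neq j}E_{ij}\otimes E_{ij}$, and analyzing this on the span of $\Omega$ and its complement shows the feasible range of $c$ is exactly $[0, \tfrac{1}{n}]$ — in particular $c = 1$ is infeasible and $c = \tfrac1n$ is the closest feasible value, giving $\Phi_* = \Phi_{1/n}$. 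Finally $\phi_{\Quad}(\cc S_n) = \tfrac1n\langle \Phi_*, P_{\cc S_n}\rangle_K$ can be read off from (\ref{eq:K-ip-unital}).

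The main obstacle I anticipate is the reduction-by-symmetry step: one must argue carefully that after applying both the diagonal-unitary averaging (Lemma \ref{lem:F-norm-averaging}) and permutation averaging, every constraint of (\ref{eq:PhiProgram}) is preserved and the objective does not increase, and that the diagonal action of the optimizer is pinned down to $E_{ii}\mapsto \tfrac1n I_n$ rather than merely to a general constant-diagonal matrix — this uses that trace preservation forces $\tr(\Phi(E_{ii})) = 1$ while the range constraint forces $\Phi(E_{ii})$ to have constant diagonal, and after permutation averaging the off-diagonal of $\sum_i \Phi(E_{ii})$ must vanish. Alternatively, if the symmetry reduction proves delicate, one can instead bypass it by writing down the dual semidefinite program explicitly and checking that $\Phi_*$ together with an appropriate dual feasible point satisfies complementary slackness, invoking strong duality (the Slater-type condition in the last Proposition of the Background section); I would keep this as a fallback.
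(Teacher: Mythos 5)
Your argument is correct and follows the same backbone as the paper's: reduce to $\Phi=\Phi'$ via Lemma \ref{lem:F-norm-averaging}, use that the only diagonal matrices in $\cc S_n$ are scalars together with trace preservation to force $\Phi'(E_{ii})=\tfrac1n I_n$, and then show the off-diagonal coefficients cannot exceed $\tfrac1n$. Where you diverge is in the last step: the paper bounds each coefficient $B_{ij}=\Phi'(E_{ij})_{ij}$ separately, applying $2$-positivity to the rank-one matrix $E_{ii}\otimes E_{11}-\la E_{ij}\otimes E_{12}-\bar\la E_{ji}\otimes E_{21}+E_{jj}\otimes E_{22}$ to get $|B_{ij}|\le \tfrac1n$ for every pair, whereas you add a permutation-averaging step to collapse to a single parameter $c$ and then read the constraint off the spectrum of $\Ch(\Phi_c)$. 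Both are sound; your route requires the extra (but routine) check that permutation conjugation preserves the constraints and the objective, which is available from equation (\ref{eq:F-norm-invariance}) and the invariance of $\cc S_n$ and $P_n$ under permutations, while the paper's route avoids averaging at the cost of having to note at the end that setting all $B_{ij}=\tfrac1n$ simultaneously is feasible. On that last point your writeup is actually more complete than the paper's: you explicitly verify $\Ch(\Phi_*)\succeq 0$ via the decomposition into the maximally entangled projection plus a nonnegative diagonal, a step the paper leaves implicit.

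One small inaccuracy: the feasible range of $c$ is not $[0,\tfrac1n]$. Diagonalizing $\Ch(\Phi_c)=\tfrac1n I_{n^2}+c\bigl(\sum_{i,j}E_{ij}\otimes E_{ij}-\sum_i E_{ii}\otimes E_{ii}\bigr)$ on the span of the vectors $e_i\otimes e_i$ gives eigenvalues $\tfrac1n+c(n-1)$ and $\tfrac1n-c$, so the true range is $\bigl[-\tfrac{1}{n(n-1)},\tfrac1n\bigr]$; small negative $c$ is feasible. This does not affect your conclusion, since only the upper bound $c\le\tfrac1n$ is relevant to minimizing $n(n-1)|c-1|^2$, but the endpoint claim as stated is wrong. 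You should also record the one-line reason $c$ may be taken real (hermiticity of $\Phi$ gives $B_{ji}=\overline{B_{ij}}$, and permutation invariance gives $B_{ij}=B_{ji}$), which you use tacitly when treating the optimization as one over a real interval.
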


\begin{proof}
    Applying Definition \ref{defn:Phi-prime}, we see that $P_n = P_n'$; hence, by Lemma \ref{lem:F-norm-averaging}, we can assume that $\Phi_\ast = \Phi_\ast'$ since $\|\Phi_\ast' - P_n\|_K \leq \|\Phi_\ast - P_n\|_K$. Moreover, since $\Phi_\ast(M_n)\subset \cc S_n$, it follows that $\Phi_\ast'(M_n)\subset \cc S_n$ using the identity (\ref{eq:prime-averaging}) and the fact that $Q^*\cc S_n Q = \cc S_n$ where $Q$ is any diagonal unitary matrix. Since $\Phi_\ast'(E_{ii})$ is a diagonal matrix for all $i=1,\dotsc, n$ and the only diagonal matrices in $\cc S_n$ are scalar multiples of the identity, it follows that $\Phi_\ast'(E_{ii}) = \frac{1}{n}I_n$ since $\Phi_\ast'$ is trace-preserving.
    
    For ease of notation, for $i\not= j$, let $B_{ij} := \Phi_\ast'(E_{ij})_{ij}$. We have for all $|\la|\leq 1$ that 
    \[E_{ii}\otimes E_{11} -\la E_{ij}\otimes E_{12} -\bar\la E_{ji}\otimes E_{21} + E_{jj}\otimes E_{22}\succeq 0.\]
    Since $\Phi_\ast'$ is unital, completely positive again by Lemma \ref{lem:F-norm-averaging}, this implies that
    \begin{equation*}
        \Phi_\ast'(E_{ii})\otimes E_{11} - \la\Phi_\ast'(E_{ij})\otimes E_{12} - \bar\la\Phi_\ast'(E_{ji})\otimes E_{21} + \Phi_\ast'(E_{jj})\otimes E_{22}\succeq 0;
    \end{equation*}
    hence, 
    \begin{equation*}
        \frac{1}{n}I_n\otimes E_{11} - |B_{ij}| E_{ij}\otimes E_{12} - |B_{ij}| E_{ji}\otimes E_{21} + \frac{1}{n} I_n\otimes E_{22}\succeq 0.
    \end{equation*}
    This implies that $|B_{ij}|\leq \frac{1}{n}$.
    
    It is now easy to see that the distance is minimized when $B_{ij} = \frac{1}{n}$ for all $i\not= j$. \qedhere
\end{proof}

\begin{remark}
    Let $\cc J_n\subset M_n$, be the subspace of all diagonal matrices of trace zero. We see that $\cc S_n = \cc J_n^\perp$. There is a canonical operator space structure on $M_n/\cc J_n$, first studied in detail by Farenick and Paulsen \cite{FarenickPaulsen2012}. It is shown therein that the map $\Phi_\ast$ given in (\ref{eq:diag-optimal}) gives a complete isometry $\Phi_\ast: M_n/\cc J_n\to \cc S_n$. 
    
    Defining an operator system structure on $\cc S_n$ by $X\in M_k(\cc S_n)$ is positive if 
    \begin{equation*}
        \Phi_\ast^{-1}\otimes \id_{M_k}(X)\cap M_{nk}^+\not= \emptyset,
    \end{equation*}
    we observe that $\cc S_n$ equipped with this operator system structure is completely order isomorphic to $M_n/\cc J_n$.
\end{remark}

\begin{question}
    For any matricial system $\cc S\subset M_n$, is it true that the minimizing map $\Phi_\ast$ in (\ref{eq:min-quantum-operation}) gives a complete isometry $\Phi_\ast: M_n/\cc S^{\perp}\to \cc S$?
\end{question}

\subsection{Applications to Graph Systems}

\begin{defn}
For a given graph $\G=(V,E)$ with $|V|=n$, we define the corresponding \emph{graph system} to be
\begin{equation}\label{graph_subsystem}
    \cc S_{\G}:=\{X\in M_n(\mathbb{C}) : X_{ij}=0,\; i\neq j,\; (i,j)\notin E\}.
\end{equation}
\end{defn}
The corresponding projection for this system is 
\[P_{\cc S_{\G}}=\sum_i E_{ii}\otimes E_{ii} + \sum_{(i,j)\in E}E_{ij}\otimes E_{ij}.\] The functions $\phi_{\Lin}(\cc S_\G)$ and $\phi_{\Quad}(\cc S_\G)$ will also be written equivalently as
$\phi_{\Lin}(\G)$ and $\phi_{\Quad}(\G)$.

\begin{notation}
We use the definition of the strong graph product $\G\boxtimes\La$ of two graphs $\G$ and $\La$ given in \cite{GartnerMatousek2012}*{Definition 3.4.1}. Equivalently, we have that the strong product may be defined by the relation
\begin{equation*}
    \cc S_{\G}\otimes \cc S_\La = \cc S_{\G\boxtimes\La}.
\end{equation*}
\end{notation}

\begin{notation}
For a given graph $\G=(V,E)$, we define the graph complement $\overline \G = (V, \overline E)$ where $(i,j)\in \overline E$ if and only if $i\not= j$ and $(i,j)\not\in E$. We recall that the clique number $\omega(\Gamma)$ is defined as the size of the maximal subset of vertices such that all vertices are connected. Similarly, the independence number $\alpha(\G)=\omega(\overline{\G})$ is defined as the size of the maximal subset of vertices such that no vertices are connected.
\end{notation}

We observe that graph systems may be characterized as those matricial systems $\cc S$ for which the orthogonal project $P_{\cc S}$ is a Schur multiplier.

\begin{lem}\label{lem:schur-replace}
    Let $P_\mathcal{S}$ be a Schur multiplier. Then the program (\ref{eq:PhiProgram}) is minimized by $\Phi$ being a Schur multiplier.
\end{lem}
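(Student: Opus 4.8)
The plan is as follows. Starting from the optimizer $\Phi_*$ of the program (\ref{eq:PhiProgram}) (which exists and is unique), I will construct a Schur multiplier $\Psi$ that is feasible for (\ref{eq:PhiProgram}) and satisfies $\|\Psi - P_{\cc S}\|_K\le\|\Phi_* - P_{\cc S}\|_K$; by uniqueness of the optimizer this forces $\Psi=\Phi_*$, so $\Phi_*$ is itself a Schur multiplier. First I would unwind the hypothesis: writing $P_{\cc S}(X)=A\circ X$, the facts that $P_{\cc S}$ is a unital, self-adjoint idempotent force $A_{ij}\in\{0,1\}$, $A_{ii}=1$, and $A_{ij}=A_{ji}$, so that $\cc S=\cc S_\G$ is the graph system of the graph $\G$ with edge set $\{(i,j):i\ne j,\ A_{ij}=1\}$ and $P_{\cc S}(E_{ij})=A_{ij}E_{ij}$.

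Now let $\Phi$ be any feasible map for (\ref{eq:PhiProgram}) and let $B:=B_\Phi$ be the matrix of Corollary \ref{cor:improved-expectation}, so $B_{ii}=\|\Phi(1)\|=\|I_n\|=1$ and $B_{ij}=\Phi(E_{ij})_{ij}$ for $i\ne j$; the corollary gives $B\succeq 0$. Set $\Psi(X):=B\circ X$. I would then verify feasibility clause by clause: $\Psi$ is unital since $B\circ I_n=I_n$; trace-preserving since $\tr(B\circ X)=\sum_i B_{ii}X_{ii}=\tr X$; completely positive since $\Ch(\Psi)=\sum_{i,j}B_{ij}E_{ij}\otimes E_{ij}$ is the matrix $B$ acting on $\spn\{e_i\otimes e_i\}$ and $0$ on its orthogonal complement, hence positive semidefinite, so Choi's Theorem applies; and $\Psi(M_n)\subset\cc S_\G$ since for $i\ne j$ with $(i,j)\notin E$ feasibility of $\Phi$ gives $\Phi(E_{ij})\in\cc S_\G$ and therefore $B_{ij}=\Phi(E_{ij})_{ij}=0$.

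For the objective, $\Psi-P_{\cc S}$ is the Schur multiplier associated to $B-A$, so Remark \ref{rmk:schur} gives $\|\Psi-P_{\cc S}\|_K=\|B-A\|_2$, and since $B_{ii}=A_{ii}=1$ this is $\big(\sum_{i\ne j}|B_{ij}-A_{ij}|^2\big)^{1/2}$. On the other hand, discarding the non-negative $\Phi(I_n)$-term in Definition \ref{def:F-norm} and then bounding each Hilbert--Schmidt norm below by a single matrix entry,
\[
\|\Phi-P_{\cc S}\|_K^2\ \ge\ \sum_{i\ne j}\|\Phi(E_{ij})-A_{ij}E_{ij}\|_2^2\ \ge\ \sum_{i\ne j}\big|\Phi(E_{ij})_{ij}-A_{ij}\big|^2\ =\ \sum_{i\ne j}|B_{ij}-A_{ij}|^2.
\]
Hence $\|\Psi-P_{\cc S}\|_K\le\|\Phi-P_{\cc S}\|_K$, which completes the argument when applied to $\Phi=\Phi_*$.

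The only genuinely load-bearing ingredient is the positive semidefiniteness of $B_\Phi$, that is, Corollary \ref{cor:improved-expectation} (which rests on Lemma \ref{lem:choi-expectation}); this is exactly what lets the ``diagonalized'' candidate $\Psi$ remain completely positive, and I expect it to be the only step requiring real thought, the rest being bookkeeping with the definition of $\|\cdot\|_K$ and the graph structure. If one prefers to lean on the paper's existing reduction machinery, an alternative opening move is to first replace $\Phi$ by $\Phi'$ from Definition \ref{defn:Phi-prime}: one checks $P_{\cc S}'=P_{\cc S}$ for a Schur multiplier, $\Phi'$ stays feasible by Lemma \ref{lem:F-norm-averaging} together with the invariance $Q^*\cc S_\G Q=\cc S_\G$ for diagonal unitaries $Q$, and $\|\Phi'-P_{\cc S}\|_K\le\|\Phi-P_{\cc S}\|_K$; then $\Phi'$ already acts as $E_{ij}\mapsto B_{ij}E_{ij}$ off the diagonal, and only the diagonal images $\Phi'(E_{ii})$ — which do not affect $\|\cdot-P_{\cc S}\|_K$ once $\Phi'$ is unital — must be corrected to $E_{ii}$, the complete positivity of the corrected map being again Corollary \ref{cor:improved-expectation}.
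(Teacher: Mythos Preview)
Your proof is correct and uses the same key construction as the paper: build the Schur multiplier from the matrix $B_\Phi$ of Corollary~\ref{cor:improved-expectation}, verify it is feasible, and show the objective does not increase. Your primary argument is in fact slightly more direct than the paper's --- you bound $\|\Phi-P_{\cc S}\|_K$ below by $\|B_\Phi-A\|_2$ via a bare-hands entry-wise estimate, whereas the paper routes through the intermediate map $\Phi'$ using Lemma~\ref{lem:F-norm-averaging} (the very alternative you sketch at the end) --- and you also explicitly check the range condition $\Psi(M_n)\subset\cc S$, which the paper leaves implicit. One cosmetic point: in your displayed chain the ``discarded $\Phi(I_n)$-term'' is actually zero since both $\Phi$ and $P_{\cc S}$ are unital, so the first $\ge$ is an equality.
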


\begin{proof}
    From Lemma \ref{lem:F-norm-averaging} we know that given a selected $\Phi$ we have that for $\Phi'$ as in Definition \ref{defn:Phi-prime} that $\| \Phi'-P_{\cc S}'\|_K=\| \Phi'-P_{\cc S}\|_K\leq \| \Phi-P_{\cc S}\|_K$. However, $\Phi'$ is not necessarily a Schur multiplier. From Corollary \ref{cor:improved-expectation}, consider the positive semidefinite matrix $B_\Phi$, and denote by $\tilde\Phi$ its associated Schur multiplier. We have from that result that $\Phi'(E_{ij}) = \tilde\Phi(E_{ij})$ for all $i\not= j$, and that $\tilde\Phi$ is unital, trace-preserving and completely positive.

    To finish, we note that $\|\tilde{\Phi}-P_{\cc S}\|_K = \|\Phi'-P_{\cc S}\|_K$. This follows from $\tilde\Phi$, $\Phi'$, and $P_{\cc S}$ being unital, so $(\tilde\Phi - P_{\cc S})(I_n) = 0 = (\Phi' - P_{\cc S})(I_n)$. Equality follows by Definition \ref{def:F-norm}.\qedhere
    
\end{proof}

\begin{cor}\label{cor:phi-lin-graph}
Given a graph system $S_\G$ for $\G=(V,E)$, the functions $\phi_{\Quad}(\G)$ and $\phi_{\Lin}(\G)$ are given by the semidefinite programs
\begin{equation}\label{eq:PhiLinGraph}
\begin{aligned}
    \phi_{\Lin}(\Gamma)=\max&&&(A\bullet J_n)/n,\\
    \textup{s.t.}&&&A_{ii}=1,\:\forall i=1,\ldots, n,\\
    &&&A_{ij}=0\textup{ if }(i,j)\in\overline{E},\\
    &&&A\succeq 0
\end{aligned}
\end{equation}
and $\phi_{\Quad}(\G)=(A_\ast\bullet J_n)/n$, where
\begin{equation}\label{eq:PhiQuadGraph}
\begin{aligned}
    A_\ast=\operatorname{argmin}&&&\| A- J_n\|_2\\
    \textup{s.t.}&&&A_{ii}=1,\:\forall i=1,\ldots, n,\\
    &&&A_{ij}=0\textup{ if }(i,j)\in\overline{E},\\
    &&&A\succeq 0,
\end{aligned}
\end{equation}
where $J_n\in M_n$ is the matrix of all ones.
\end{cor}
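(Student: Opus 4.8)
The plan is to invoke Lemma \ref{lem:schur-replace} to reduce each program to an optimization over Schur multipliers, and then to read off the constraints and the objective in the matrix variable.

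First I would record that $P_{\cc S_\G}$ is the Schur multiplier $X\mapsto C_\G\circ X$, where $(C_\G)_{ij}=1$ when $i=j$ or $(i,j)\in E$ and $(C_\G)_{ij}=0$ otherwise; this is just a restatement of the displayed formula for $P_{\cc S_\G}$. For a Schur multiplier $\Phi(X)=A\circ X$ one has $\Ch(\Phi)=\sum_{ij}A_{ij}\,E_{ij}\otimes E_{ij}=\Delta(A)$ in the notation of the proof of Lemma \ref{lem:choi-expectation}; since $\Delta$ is a $\ast$-embedding, this is positive semidefinite exactly when $A\succeq 0$, so $\Phi$ is completely positive iff $A\succeq 0$. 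As $\Phi(I_n)=\diag(A_{11},\dots,A_{nn})$, $\Phi$ is unital iff $A_{ii}=1$ for every $i$, and then $\tr(\Phi(X))=\sum_i A_{ii}X_{ii}=\tr(X)$ automatically; finally $\Phi(E_{ij})=A_{ij}E_{ij}$, so $\Phi(M_n)\subset\cc S_\G$ iff $A_{ij}=0$ for all $(i,j)\in\overline E$. Thus Schur multipliers feasible for (\ref{eq:PhiProgram}) (equivalently for (\ref{eq:PhiLin})) correspond bijectively to matrices $A$ feasible for the SDP in the statement; the feasible set is nonempty since $A=I_n$ always qualifies.

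Next I would compute the objectives on such $\Phi$. By identity (\ref{F-norm-hilbert-schmidt}) of Remark \ref{rmk:schur}, $\langle\Phi,P_{\cc S_\G}\rangle_K=\langle A,C_\G\rangle_2=\sum_{(i,j):\,i=j\text{ or }(i,j)\in E}A_{ij}$, and on the feasible set every entry with $(i,j)\in\overline E$ vanishes, so this equals $\sum_{i,j}A_{ij}=A\bullet J_n$; dividing by $n$ turns (\ref{eq:PhiLin}) into (\ref{eq:PhiLinGraph}). That every feasible $\Phi$ (not necessarily a Schur multiplier) can be replaced without changing $\langle\Phi,P_{\cc S_\G}\rangle_K$ by a feasible Schur multiplier follows from the construction in Lemma \ref{lem:schur-replace}: it produces the Schur multiplier $\tilde\Phi$ associated with the matrix $B_\Phi$ of Corollary \ref{cor:improved-expectation}, which is feasible for the SDP (its off-diagonal entries $[\Phi(E_{ij})]_{ij}$ still vanish on $\overline E$, and $\|\Phi(1)\|=1$ since $\Phi$ is unital) and, being unital with the same entries $[\Phi(E_{ij})]_{ij}$, satisfies $\langle\tilde\Phi,P_{\cc S_\G}\rangle_K=\langle\Phi,P_{\cc S_\G}\rangle_K$ by (\ref{eq:K-ip-unital}); together with the trivial reverse inequality this gives that $\phi_{\Lin}(\G)$ equals the value of (\ref{eq:PhiLinGraph}).

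For $\phi_{\Quad}$, the difference $\Phi-P_{\cc S_\G}$ is the Schur multiplier by $A-C_\G$, so again by (\ref{F-norm-hilbert-schmidt}), $\|\Phi-P_{\cc S_\G}\|_K=\|A-C_\G\|_2$. On the feasible set $A$ and $C_\G$ both vanish on the $\overline E$ entries while $J_n$ equals $1$ there, so $\|A-J_n\|_2^2=\|A-C_\G\|_2^2+|\overline E|$, a constant independent of the feasible $A$; hence minimizing $\|A-J_n\|_2$ and minimizing $\|A-C_\G\|_2=\|\Phi-P_{\cc S_\G}\|_K$ over the feasible set have the same argmin. (This is the one point needing explicit care: the Schur-multiplier reduction naturally produces the distance to $C_\G$, whereas (\ref{eq:PhiQuadGraph}) uses the distance to $J_n$, and the equivalence rests on this constant offset.) By Lemma \ref{lem:schur-replace} the latter minimizer is attained by a Schur multiplier, and by uniqueness of the optimizer of (\ref{eq:PhiProgram}) — a strictly convex objective over a closed convex set — this multiplier is $\Phi_*$ itself, so $\Phi_*(X)=A_*\circ X$ with $A_*$ as in (\ref{eq:PhiQuadGraph}); then $\phi_{\Quad}(\G)=\tfrac1n\langle\Phi_*,P_{\cc S_\G}\rangle_K=(A_*\bullet J_n)/n$ by the objective computation above. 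Beyond this, no serious obstacle is expected: the essential work is in Lemma \ref{lem:schur-replace} and Remark \ref{rmk:schur}, and the rest is bookkeeping.
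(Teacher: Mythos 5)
Your proposal is correct and follows essentially the same route as the paper: reduce to Schur multipliers via Lemma \ref{lem:schur-replace} and Corollary \ref{cor:improved-expectation}, identify the feasibility constraints with those of the matrix SDP, use the vanishing of the $\overline{E}$-entries to replace the augmented adjacency matrix by $J_n$ in the linear objective, and absorb the resulting constant offset ($|\overline{E}|$) in the quadratic one. Your additional care in verifying the bijection between feasible Schur multipliers and feasible matrices $A$, and in invoking uniqueness of the minimizer to conclude $\Phi_*$ is itself a Schur multiplier, only makes explicit what the paper leaves implicit.
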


\begin{proof}
    Let $P_\G := I_n + \sum_{(i,j)\in E} E_{ij}\in M_n$ be the augmented adjacency matrix of $\G$. 
    
    We begin by considering the program (\ref{eq:PhiLin}), the objective for which is $\frac{1}{n}\ip{\Phi}{P_{\cc S_\G}}_K$.
    Since $P_{\cc S_\G}$ is a Schur multiplier, we have by (\ref{eq:K-ip-unital}), using the same notation as in Corollary \ref{cor:improved-expectation}, that
    \begin{equation*}
        \frac{1}{n}\ip{\Phi}{P_{\cc S_\G}}_K = \frac{1}{n}B_\Phi\bullet P_\G = \frac{1}{n} B_\Phi\bullet J_n.
    \end{equation*}
    The last equality follows by noting that $\Phi(M_n)\subset \cc S$; hence, we have that for all $i\not=j$ that $(B_\Phi)_{ij} =\Phi(E_{ij})_{ij} = 0$ if $(i,j)\not\in E$. The Schur multiplier associated to $B_\Phi$ is unital, completely positive and trace-preserving; thus, $\Phi$ may be taken to be a Schur multiplier associated to $B_\Phi$.
    
    We now turn our attention to the program (\ref{eq:min-quantum-operation}).
    By the proof of Lemma \ref{lem:schur-replace} we may replace $\Phi_\ast$ by the Schur multiplier associated to $B_{\Phi_\ast}$.  By Remark \ref{rmk:schur} we have that 
    \begin{equation*}
        \|\Phi_\ast - P_{\cc S_\G}\|_K = \|B_{\Phi_\ast} - P_\G\|_2.
    \end{equation*}
    Since we have already noted that for any $\Phi$ satisfying the constraints of (\ref{eq:min-quantum-operation}) that $B_\Phi\bullet P_\G = B_\Phi\bullet J_n$, we have that the squared objective $\|B_{\Phi} - P_\G\|_2^2$ is up to a constant independent of $\Phi$ equal to $\|B_{\Phi} - J_n\|_2^2$. Thus, these objectives are interchangeable when computing argmin, and the result is obtained. \qedhere
\end{proof}

We recall that from \cite[Theorem 4]{Lovasz1979} that the \emph{Lov\'asz theta function} $\vartheta(\G)$ of a graph $\G = (V, E)$ can be expressed as the following semidefinite program:

\begin{equation}\label{eq:lovasz-sdp}
\begin{aligned}
    \vartheta(\Gamma)=\max&\quad Y\bullet J_n\\
    \text{s.t.} 
    &\quad Y_{ij}=0\text{ if } (i,j)\in E,\\
    &\quad \tr(Y)=1,\\
    &\quad Y\succeq 0.
\end{aligned}
\end{equation}

\begin{prop}\label{prop:phi-quad-omega}
For any graph $\Gamma=(V,E)$,
\begin{equation}
    \vartheta(\Gamma)\geq\phi_{\Lin}(\overline{\Gamma})\geq \phi_{\Quad}(\overline{\Gamma}).
\end{equation}
\end{prop}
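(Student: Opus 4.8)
The plan is to treat the two inequalities separately; the right-hand one is essentially free, and the left-hand one reduces to a direct comparison of semidefinite programs.

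For the inequality $\phi_{\Lin}(\overline{\Gamma}) \geq \phi_{\Quad}(\overline{\Gamma})$, I would simply apply Proposition \ref{eq:phi_linquad_ineq} to the matricial system $\cc S = \cc S_{\overline{\Gamma}}$, which is a graph system and hence a matricial system; there is nothing further to do.

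For the inequality $\vartheta(\Gamma) \geq \phi_{\Lin}(\overline{\Gamma})$, the idea is to exhibit a feasibility-preserving rescaling that sends every feasible point of the $\phi_{\Lin}(\overline{\Gamma})$ program into the Lov\'asz theta program (\ref{eq:lovasz-sdp}). First I would unwind the complement conventions: since $\overline{\overline{\Gamma}} = \Gamma$, Corollary \ref{cor:phi-lin-graph} applied to $\overline{\Gamma}$ states that $\phi_{\Lin}(\overline{\Gamma})$ is the maximum of $(A\bullet J_n)/n$ over $A \succeq 0$ with $A_{ii} = 1$ for all $i$ and $A_{ij} = 0$ whenever $(i,j) \in E$. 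Given any such feasible $A$, set $Y := A/n$. Then $Y \succeq 0$, $\tr(Y) = \tr(A)/n = 1$ (using $A_{ii} = 1$), and $Y_{ij} = 0$ whenever $(i,j) \in E$, so $Y$ is feasible for (\ref{eq:lovasz-sdp}); moreover $Y \bullet J_n = (A\bullet J_n)/n$. Hence $\vartheta(\Gamma) \geq (A\bullet J_n)/n$ for every feasible $A$, and taking the supremum over such $A$ gives $\vartheta(\Gamma) \geq \phi_{\Lin}(\overline{\Gamma})$.

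I do not expect a genuine obstacle: the only delicate point is bookkeeping — matching the hypothesis ``$(i,j)\in\overline{E}$'' in Corollary \ref{cor:phi-lin-graph} against the ``$(i,j)\in E$'' in (\ref{eq:lovasz-sdp}) after passing to $\overline{\Gamma}$, and observing that the normalization $A_{ii}=1$ in the $\phi_{\Lin}$ program corresponds, after dividing by $n$, exactly to $\tr(Y)=1$ in the theta program. One could attempt the reverse direction (scaling a theta-feasible $Y$ up by $n$), but a generic theta-feasible $Y$ need not have constant diagonal, so it would fail the constraint $A_{ii}=1$; thus the stated direction is the clean one, and it is consistent with the inequality being possibly strict.
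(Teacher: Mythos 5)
Your proposal is correct and follows essentially the same route as the paper: the second inequality is quoted from Proposition \ref{eq:phi_linquad_ineq}, and the first is obtained by the rescaling $Y = A/n$, which the paper performs as a change of variables in the theta program (\ref{eq:lovasz-sdp}) before checking feasibility of the optimal $\phi_{\Lin}(\overline{\Gamma})$ argument. Your bookkeeping of the complement convention ($(i,j)\in\overline{\overline{E}}=E$) and of the diagonal normalization $A_{ii}=1\Rightarrow\tr(Y)=1$ matches the paper's argument exactly.
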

\begin{proof}
We make the substitution $Y=A/n$ into the program (\ref{eq:lovasz-sdp}) to obtain:
\begin{equation}\label{eq:lovasz-sdp-var}
\begin{aligned}
    \vartheta(\Gamma)=\max&\quad (A\bullet J_n)/n\\
    \text{s.t.}
    &\quad A_{ij}=0\text{ if } (i,j)\in E,\\
    &\quad \tr(A)=n,\\
    &\quad A\succeq 0.
\end{aligned}
\end{equation}
By Corollary \ref{cor:phi-lin-graph} the optimal argument $A_\ast$ from $\phi_{\Lin}(\overline{\Gamma})$ satisfies the constraints of $\vartheta(\Gamma)$ since $\tr(A_\ast)=\sum_{i}(A_\ast)_{ii}=\sum_{i}1=n$. The optimal value of the semidefinite program (\ref{eq:lovasz-sdp-var}) is at least this value, so $\vartheta(\Gamma)\geq\phi_{\Lin}(\overline{\Gamma})$. The second inequality follows directly from Proposition \ref{eq:phi_linquad_ineq}.
\end{proof}

Although the inequality $\phi_{\Lin}(\Gamma)\geq \omega(\Gamma)$ does not hold (see the second inequality in Proposition \ref{thingsthatarewrong}), a slight relaxation of the right side does work.
\begin{prop}
For any graph $\Gamma=(V,E)$ with $|V|=n$,
\begin{equation}\label{eq:phi-ineq}
    \phi_{\Quad}(\Gamma)\geq \omega(\Gamma)(\omega(\Gamma)-1)/n+1.
\end{equation}
\end{prop}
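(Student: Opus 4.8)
The plan is to exhibit an explicit feasible matrix built from a maximum clique and then argue that the true distance‑minimizer cannot do worse against $J_n$. By Corollary~\ref{cor:phi-lin-graph} we may work with matrices: $\phi_{\Quad}(\G)=\tfrac1n\,A_\ast\bullet J_n$, where $A_\ast$ is the Hilbert–Schmidt nearest point to $J_n$ in $\mathcal{F}:=\{A\in M_n: A\succeq 0,\ A_{ii}=1,\ A_{ij}=0\text{ whenever }i\neq j\text{ and }(i,j)\notin E\}$. Fix a clique $K\subseteq V$ with $|K|=\omega:=\om(\G)$ and, after relabeling, take $K=\{1,\dots,\omega\}$. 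Let $A_0:=J_\omega\oplus I_{n-\omega}$ (all ones on the $K\times K$ block, identity elsewhere). Since $K$ is a clique, $A_0$ vanishes on every non‑edge, has unit diagonal, and is positive semidefinite, so $A_0\in\mathcal{F}$, and $A_0\bullet J_n=\omega^2+(n-\omega)=n+\omega(\omega-1)$. In particular $A_0$ is a feasible point of value $\omega(\omega-1)/n+1$ for $\phi_{\Lin}(\G)$, so the bound for $\phi_{\Lin}$ is immediate; the real content is to promote it to the minimizer, i.e.\ to prove $A_\ast\bullet J_n\ge A_0\bullet J_n$.

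The main tools will be optimality of $A_\ast$ together with the fact that $A_0$ is a $0$–$1$ matrix. From $\|A_0\|_2^2=A_0\bullet J_n$ and $\|A_\ast-J_n\|_2^2\le\|A_0-J_n\|_2^2$ one gets $2(A_\ast\bullet J_n)-\|A_\ast\|_2^2\ge A_0\bullet J_n$. Applying the variational inequality for the metric projection at the feasible points $I_n$ and $A_0$ (namely $\langle A_\ast-J_n,\,I_n-A_\ast\rangle\ge 0$ and $\langle A_\ast-J_n,\,A_0-A_\ast\rangle\ge 0$), and using $\langle A_\ast,I_n\rangle=\langle J_n,I_n\rangle=n$, one gets
\[
\|A_\ast\|_2^2\le A_\ast\bullet J_n
\qquad\text{and}\qquad
A_\ast\bullet J_n\ \ge\ A_0\bullet J_n+\langle A_\ast,\,A_\ast-A_0\rangle .
\]
The second inequality reduces the proposition to showing $\langle A_\ast,A_\ast-A_0\rangle\ge0$. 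Writing $A_\ast=\begin{pmatrix}B&Q\\Q^{*}&R\end{pmatrix}$ in the block decomposition along $K$ and its complement and expanding, $\langle A_\ast,A_\ast-A_0\rangle\ge0$ is equivalent to $\tr(B^2)+2\|Q\|_2^2+\tr(R^2)\ge \mathbf 1^{*}B\,\mathbf 1+(n-\omega)$, where $\mathbf 1$ denotes the all‑ones vector of length $\omega$; and since $R$ is a correlation matrix of size $n-\omega$, $\tr(R^2)\ge(\tr R)^2/(n-\omega)=n-\omega$. So it suffices to prove
\[
\tr(B^2)+2\|Q\|_2^2\ \ge\ \mathbf 1^{*}B\,\mathbf 1 .
\]

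The hard part is precisely this last inequality, and it genuinely uses that $A_\ast$ is \emph{optimal}, not merely feasible: for a general $A\in\mathcal{F}$ the block $B$ can be a correlation matrix near $J_\omega$ with $\tr(B^2)<\mathbf 1^{*}B\,\mathbf 1$ and $Q=0$, so no purely positive‑semidefiniteness argument closes it. My plan for it is to bring in the first‑order optimality conditions at $A_\ast$, which give $2(A_\ast-J_n)=M+Z$ with $Z\succeq0$, $ZA_\ast=0$, and $M$ Hermitian supported on the diagonal together with the non‑edges. Because a clique contains no non‑edges, $M$ is diagonal on the $K\times K$ block, which forces $B=J_\omega+\tfrac12(Z|_K)^{\mathrm{off}}$ with $Z|_K\succeq0$; substituting this yields $\tr(B^2)-\mathbf 1^{*}B\,\mathbf 1=\tfrac12\sum_{i\neq j\in K}Z_{ij}+\tfrac14\|(Z|_K)^{\mathrm{off}}\|_2^2$, so the inequality holds outright once $\sum_{i\neq j\in K}Z_{ij}\ge0$, and otherwise one must show that the contribution of $Q$ — controlled from below through the relation $ZA_\ast=0$ on the off‑clique blocks — compensates for the deficit. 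Carrying out this bookkeeping is the crux; once it is done, $A_\ast\bullet J_n\ge n+\omega(\omega-1)$ follows, and dividing by $n$ gives the stated bound.
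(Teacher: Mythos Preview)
Your construction of $A_0 = J_\omega \oplus I_{n-\omega}$ and the value $A_0 \bullet J_n = n + \omega(\omega-1)$ are exactly what the paper uses. The paper's own proof stops right there: having checked that $A_0$ is feasible, it asserts that ``the optimal value of the program must be at least $(J\bullet A_0)/n$.'' You are right to flag that feasibility of a single point directly yields a lower bound only on the \emph{maximization} $\phi_{\Lin}$, whereas $\phi_{\Quad}(\G) = \tfrac1n\,A_\ast \bullet J_n$ is the linear functional evaluated at the \emph{distance} minimizer, and there is no general principle forcing the nearest point to $J_n$ to have the largest inner product with $J_n$ over the feasible set (the paper's own computation $\phi_{\Lin}(P_5) > \phi_{\Quad}(P_5)$ confirms these quantities differ). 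So the paper's argument, taken at face value, leaves precisely the gap you are trying to fill.

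That said, your attempt does not close it either. The variational-inequality reductions are correct, and the target $\tr(B^2) + 2\|Q\|_2^2 \geq \mathbf{1}^{*} B\,\mathbf{1}$ is the right inequality to aim for. But the KKT step you outline is only a plan: from $2(A_\ast - J_n) = M + Z$ with $ZA_\ast = 0$ you correctly extract $B_{ij} = 1 + \tfrac12 Z_{ij}$ on the clique off-diagonal, yet you provide no mechanism for bounding $\|Q\|_2^2$ from below via complementary slackness so as to absorb a possibly negative $\sum_{i\neq j\in K} Z_{ij}$, and you acknowledge as much (``carrying out this bookkeeping is the crux''). As it stands, the proposal is a sketch with the decisive inequality still open.
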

\begin{proof}
If we have some maximal clique set $C$ where $|C|=\omega(\Gamma)$, then the number of edges is $|C|(|C|-1)/2$. Now order the graph so that $1,\ldots, |C|$ are the numbers corresponding to the elements of $C$. Let our $n\times n$ matrix be
\begin{equation*}
    A = \begin{bmatrix}
    J_{|C|}&0\\
    0&I_{n-|C|}
    \end{bmatrix}.
\end{equation*}
All of the constraints hold since the diagonal is all ones, we have zeros wherever we don't have an edge, and the block matrix is positive semidefinite since $J_n$ and $I_n$ are for any $n$. Therefore, the optimal value of the program must be at least $(J\bullet A)/n=\omega(\Gamma)(\omega(\Gamma)-1)/n+1$.
\end{proof}

\begin{prop}
The program \eqref{eq:PhiQuadGraph} in the definition of $\phi_{\Quad}(\Gamma)$ can be written as the semidefinite program
\begin{equation}\label{eq:quadphi}
\begin{aligned}
    A=\operatorname{argmin}&&&t\\
    \text{s.t.}&&&A_{ii}=1,\:\forall i=1,\ldots, n,\\
    &&&A_{ij}=0\textup{ if }(i,j)\in\overline{E},\\
    &&&\begin{bmatrix}
A&0&0\\
0&I&\vec{A}\\
0&\vec{A}^T&2A\bullet J_n+t
\end{bmatrix}\succeq 0.
\end{aligned}
\end{equation}
\end{prop}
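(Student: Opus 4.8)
The plan is to recognize the objective of program \eqref{eq:PhiQuadGraph} as a least-squares quantity in $t$ and convert it to epigraph form, then express the resulting norm constraint via a Schur complement. First I would note that, because the constraint $A_{ii} = 1$ forces the diagonal of $A - J_n$ to vanish, we have $\|A - J_n\|_2^2 = \|A\|_2^2 - 2 A\bullet J_n + \|J_n\|_2^2$, and $\|J_n\|_2^2 = n^2$ is a constant, so minimizing $\|A - J_n\|_2$ over the feasible set is equivalent to minimizing $\|A\|_2^2 - 2A\bullet J_n$. Introduce a new scalar variable $t$ standing for (an upper bound on) this quantity; the program becomes: minimize $t$ subject to the affine constraints on $A$, the constraint $A\succeq 0$, and the epigraph constraint $\|A\|_2^2 - 2A\bullet J_n \le t$, i.e. $\|\vec A\|_2^2 \le 2A\bullet J_n + t$, where $\vec A$ denotes the vectorization of $A$ (so that $\|\vec A\|_2^2 = \|A\|_2^2$ and $\vec A^T \vec A$ appears below).

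Next I would rewrite the quadratic constraint $\vec A^T \vec A \le 2A\bullet J_n + t$ as a linear matrix inequality using the standard Schur complement identity: for a symmetric block matrix $\begin{bmatrix} I & \vec A \\ \vec A^T & 2A\bullet J_n + t\end{bmatrix}$, positivity is equivalent to $2A\bullet J_n + t - \vec A^T I^{-1}\vec A \ge 0$ together with $I \succeq 0$ (the latter being automatic). Since $A\bullet J_n$ and $t$ are affine in the decision variables $(A, t)$, and $\vec A$ is linear in $A$, this block matrix is affine in $(A,t)$, so the constraint is a genuine LMI. Finally, I would stack this block together with the constraint $A\succeq 0$ into a single block-diagonal LMI
\[
\begin{bmatrix}
A & 0 & 0\\
0 & I & \vec A\\
0 & \vec A^T & 2A\bullet J_n + t
\end{bmatrix}\succeq 0,
\]
which holds if and only if both diagonal blocks are positive semidefinite; this recovers exactly the program \eqref{eq:quadphi}. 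It remains to observe that the optimal $A$ of \eqref{eq:quadphi} coincides with $A_\ast$ of \eqref{eq:PhiQuadGraph}: at optimality the epigraph constraint is tight, so $t = \|A\|_2^2 - 2A\bullet J_n = \|A - J_n\|_2^2 - n^2$, and minimizing $t$ is the same as minimizing $\|A - J_n\|_2$.

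The only mild subtlety — not really an obstacle — is bookkeeping with the vectorization: one must be careful that $\|\vec A\|_2^2$ really equals $\tr(A^*A) = \|A\|_2^2$ (true with the standard stacking since $A$ is Hermitian with the feasibility constraints, and more simply since the Schur complement only uses $\vec A^T\vec A$), and that $A\bullet J_n$ is real (which it is, being $\tr(J_n A) = \sum_{ij} A_{ij}$ with $A$ Hermitian, hence equal to its conjugate). With those conventions fixed, the equivalence of \eqref{eq:PhiQuadGraph} and \eqref{eq:quadphi} is immediate from the Schur complement lemma.
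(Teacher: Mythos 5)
Your argument is correct and follows essentially the same route as the paper's proof: epigraph reformulation of the squared objective, the Schur-complement conversion of $\vec{A}^T\vec{A}\le 2A\bullet J_n+t$ into a linear matrix inequality, block-diagonal stacking with $A\succeq 0$, and discarding an additive constant. The only cosmetic difference is that the paper expands $\|A-P_\G\|_2^2$ using the augmented adjacency matrix $P_\G$ and then invokes $A\bullet P_\G=A\bullet J_n$ on the feasible set, whereas you work with $J_n$ directly and drop the constant $n^2$ in place of $P_\G\bullet P_\G$.
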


\begin{proof}
 Consider the augmented adjacency matrix $P = P_\G := I_n + \sum_{(i,j) \in E} E_{ij}$. Every $A \in M_n$ which satisfies the constraints of the above program satisfies $\tr(AP) = \tr(AJ_n)$ and therefore we may interchange $P$ and $J_n$ in our calculations. Given such an $A \in M_n$, consider the vector $\vec{A} \in \bb R^{n^2}$ defined as \begin{equation}
    \vec{A}=(A_{11},\ldots,A_{n1},A_{12},\ldots,A_{n2},\ldots, A_{1n},\ldots, A_{nn}).
\end{equation} The objective function in \eqref{eq:PhiQuadGraph} is defined as \begin{align*}
    \norm{A-P}_2^2 = \tr((A-P)^2) = A\bullet A - 2A\bullet P + P\bullet P=\vec{A}^T\vec{A}-2A\bullet P + P\bullet P.
\end{align*}Using Schur complements \cite{Zhang2005}*{Thm 1.12}, we can say that
$$
\vec{A}^T\vec{A}-2A\bullet P + P\bullet P - t\leq 0\iff
\begin{bmatrix}
I&\vec{A}\\
\vec{A}^T&2A\bullet P-P\bullet P+t
\end{bmatrix}\succeq 0.$$
So we write the conditions $A\succeq0$ and $\vec{A}^T\vec{A}-2A\bullet P + P\bullet P\leq t$ as the combined condition
$$\begin{bmatrix}
A&0&0\\
0&I&\vec{A}\\
0&\vec{A}^T&2A\bullet P-P\bullet P+t
\end{bmatrix}\succeq 0.$$
The conditions are now
\begin{equation*}
\begin{aligned}
    A=\operatorname{argmin}&&&t\\
    \text{s.t.}&&&A_{ii}=1,\:\forall i=1,\ldots, n,\\
    &&&A_{ij}=0\text{ if }(i,j)\in\overline{E},\\
    &&&\begin{bmatrix}
A&0&0\\
0&I&\vec{A}\\
0&\vec{A}^T&2A\bullet P-P\bullet P+t
\end{bmatrix}\succeq 0.
\end{aligned}
\end{equation*}

Since we are minimizing $t$, and $\tr(PP)$ is a constant then we may omit it to obtain \begin{equation*}
\begin{aligned}
    A=\operatorname{argmin}&&&t\\
    \text{s.t.}&&&A_{ii}=1,\:\forall i=1,\ldots, n,\\
    &&&A_{ij}=0\text{ if }(i,j)\in\overline{E},\\
    &&&\begin{bmatrix}
A&0&0\\
0&I&\vec{A}\\
0&\vec{A}^T&2A\bullet P+t
\end{bmatrix}\succeq 0.
\end{aligned}
\end{equation*} This finishes the proof. \qedhere

\end{proof}

\begin{lem}\label{prop:phi-lin-sdp}
The dual to $\phi_{\Lin}(\Gamma)$ is the program
\begin{equation}\label{eq:phi-lin-dual}
    \begin{aligned}
        \tilde{\phi}_{\Lin}(\Gamma)=\min&&&\tr(Y)\\
        \textup{s.t.}&&&Y_{ij}=0\textup{ if }(i,j)\in E,\\
        &&&Y\succeq J_n/n.
    \end{aligned}
\end{equation}
\end{lem}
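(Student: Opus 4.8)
The plan is to put the semidefinite program (\ref{eq:PhiLinGraph}) for $\phi_{\Lin}(\Gamma)$ from Corollary~\ref{cor:phi-lin-graph} into the standard primal shape (\ref{eq:complex-primal}) and then read off its dual via the template (\ref{eq:complex-dual}). In (\ref{eq:PhiLinGraph}) the optimization variable is $X = A \in M_n^+$, the objective matrix is the hermitian matrix $C = J_n/n$, and there are two families of affine constraints: the diagonal normalizations $E_{ii}\bullet A = 1$ for $i = 1,\dots,n$, and the vanishing of off-diagonal entries at non-edges, $A_{ij}=0$ for $(i,j)\in\overline E$. The one point requiring care is that, since $A$ is hermitian, each complex equation $A_{ij}=0$ with $\{i,j\}\in\overline E$ must be encoded by two real equality constraints; concretely one uses the hermitian test matrices $E_{ij}+E_{ji}$ and $i(E_{ji}-E_{ij})$, which pair to $0$ against $A$ exactly when $A_{ij}=0$ (and then $A_{ji}=0$ too by hermiticity). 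Every right-hand side $b_\bullet$ is then either $1$ (diagonal constraints) or $0$ (non-edge constraints).

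Next I would instantiate the dual (\ref{eq:complex-dual}). It carries one free real variable $y_i$ per diagonal constraint and a pair of free real variables $\mu_{ij},\nu_{ij}$ for each unordered non-edge $\{i,j\}\in\overline E$. Repackage $y_i$ into the $(i,i)$ entry of a hermitian matrix $Y$ and $\mu_{ij}-i\nu_{ij}$ into its $(i,j)$ entry (so the $(j,i)$ entry is the conjugate). A direct computation then gives
\[
\sum_i y_i E_{ii} + \sum_{\{i,j\}\in\overline E}\bigl(\mu_{ij}(E_{ij}+E_{ji}) + \nu_{ij}\, i(E_{ji}-E_{ij})\bigr) = Y,
\]
where $Y$ ranges precisely over the hermitian matrices whose off-diagonal entries are unconstrained on $\overline E$ and forced to vanish on $E$ (there is simply no dual variable sitting over an edge position), and whose diagonal is unconstrained. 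The dual feasibility condition $\sum y_\bullet A_\bullet - C \succeq 0$ becomes $Y - J_n/n \succeq 0$, that is $Y \succeq J_n/n$, while the dual objective $b^Ty$ collapses to $\sum_i y_i = \tr(Y)$ because only the diagonal constraints contribute a nonzero $b_\bullet$. This is exactly the program (\ref{eq:phi-lin-dual}).

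Finally, to upgrade this from an identification of programs to an equality of optimal values $\phi_{\Lin}(\Gamma) = \tilde\phi_{\Lin}(\Gamma)$, I would invoke the strong duality proposition recorded in Section~2: the matrix $A = I_n$ is positive definite and feasible for (\ref{eq:PhiLinGraph}) (its diagonal is all ones and it vanishes off the diagonal, in particular on $\overline E$), so the interior-point hypothesis holds and the primal and dual optima coincide.

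The only genuinely delicate step is the bookkeeping in the second paragraph: one must verify that the two real dual multipliers attached to each non-edge truly recombine into a single unconstrained hermitian off-diagonal entry, and that no dual variable is attached to an edge position — so that the edge entries of $Y$ are pinned to $0$ while the non-edge entries (and the diagonal) are free. Once that translation between the real and complex formulations is set up correctly, the identification of (\ref{eq:phi-lin-dual}) as the dual is an immediate instance of the primal--dual correspondence of Section~2.
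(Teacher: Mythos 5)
Your proposal is correct and follows essentially the same route as the paper: read the dual off from the standard complex semidefinite programming template of Section~2, assemble the dual multipliers into a hermitian matrix $Y$ whose entries are pinned to $0$ on edge positions and free elsewhere, and close the duality gap using the strictly feasible point $A = I_n$. The only difference is that you make explicit the real/imaginary bookkeeping for the non-edge constraints, which the paper's terser proof elides.
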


\begin{proof}
In the program \eqref{eq:PhiLinGraph}, every constraint is written in the form $E_{ij}\bullet A = \de_{ij}$ and thus we have one constraint on every element $A_{ij}$ except for where $(i,j)\in E$. Therefore the dual program can be written as
\begin{equation*}
    \begin{aligned}
        \tilde{\phi}_{\Lin}(\Gamma)=\min&&&Y\bullet \delta\\
        \text{s.t.}&&&\sum_{(i,j)\notin E}Y_{ij} - J_n/n\succeq 0.
    \end{aligned}
\end{equation*}
where $\delta = [\delta_{ij}]$. Thus, $\delta = I_n$ and $Y\bullet I_n=\tr(Y)$, and we can equivalently write the constraint as
\begin{equation*}
    \begin{aligned}
        \tilde{\phi}_{\Lin}(\Gamma)=\min&&&Y\bullet \de\\
        \text{s.t.}&&&Y_{ij}=0\text{ if }(i,j)\in E,\\
        &&&Y - J_n/n\succeq 0.
    \end{aligned}
\end{equation*}
Since $I_n$ is positive definite and satisfies all constraints of $\phi_{\Lin}$ we conclude $\tilde{\phi}_{\Lin} = \phi_{\Lin}$.
\end{proof}

\begin{prop}\label{eq:phi-lin-product}
For any two graphs $\Gamma=(V_\Gamma,E_\Gamma)$ and $\Lambda=(V_\Lambda,E_\Lambda)$, we have
\begin{equation}
    \phi_{\Lin}(\Gamma\boxtimes\Lambda) = \phi_{\Lin}(\Gamma)\phi_{\Lin}(\Lambda).
\end{equation}
\end{prop}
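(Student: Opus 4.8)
The plan is to read off $\phi_{\Lin}$ from its primal semidefinite program \eqref{eq:PhiLinGraph} (Corollary \ref{cor:phi-lin-graph}) for the lower bound and from its dual \eqref{eq:phi-lin-dual} (Lemma \ref{prop:phi-lin-sdp}) for the upper bound, in each case using a tensor product of optimal solutions as a feasible point for the product graph. Throughout set $n=|V_\Gamma|$, $m=|V_\Lambda|$, and identify the vertex set of $\Gamma\boxtimes\Lambda$ with $V_\Gamma\times V_\Lambda$, so that $\cc S_\Gamma\otimes\cc S_\Lambda=\cc S_{\Gamma\boxtimes\Lambda}$ and $J_{nm}=J_n\otimes J_m$ under this identification.

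For $\phi_{\Lin}(\Gamma\boxtimes\Lambda)\ge \phi_{\Lin}(\Gamma)\phi_{\Lin}(\Lambda)$: let $A$ and $B$ be optimal for the programs defining $\phi_{\Lin}(\Gamma)$ and $\phi_{\Lin}(\Lambda)$. The constraint $A_{ij}=0$ for $(i,j)\in\overline{E_\Gamma}$ says exactly that $A\in\cc S_\Gamma$, and likewise $B\in\cc S_\Lambda$, so $A\otimes B\in\cc S_\Gamma\otimes\cc S_\Lambda=\cc S_{\Gamma\boxtimes\Lambda}$; its diagonal entries are $A_{ii}B_{kk}=1$, and $A\otimes B\succeq 0$ since $A,B\succeq 0$. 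Thus $A\otimes B$ is feasible for the $\phi_{\Lin}(\Gamma\boxtimes\Lambda)$ program, and since the objective is multiplicative across the tensor product, $(A\otimes B)\bullet J_{nm}/(nm)=\bigl((A\bullet J_n)/n\bigr)\bigl((B\bullet J_m)/m\bigr)=\phi_{\Lin}(\Gamma)\phi_{\Lin}(\Lambda)$, giving the bound.

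For the reverse inequality I invoke strong duality: as noted in the proof of Lemma \ref{prop:phi-lin-sdp}, $I$ is a strictly feasible primal point for any graph, so $\phi_{\Lin}(\Gamma\boxtimes\Lambda)$ equals the optimum of the dual \eqref{eq:phi-lin-dual}. Let $Y,Z$ be optimal dual solutions for $\Gamma,\Lambda$ and consider $Y\otimes Z$. Writing $Y=J_n/n+P$ and $Z=J_m/m+Q$ with $P,Q\succeq 0$, the expansion $Y\otimes Z=(J_n/n)\otimes(J_m/m)+(J_n/n)\otimes Q+P\otimes(J_m/m)+P\otimes Q$ exhibits $Y\otimes Z-J_{nm}/(nm)$ as a sum of positive semidefinite matrices, so $Y\otimes Z\succeq J_{nm}/(nm)$. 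For the support condition, $(Y\otimes Z)_{(i,k),(i',k')}=Y_{ii'}Z_{kk'}$, and if $\bigl((i,k),(i',k')\bigr)$ is an edge of $\Gamma\boxtimes\Lambda$ then (since $(i,k)\ne(i',k')$) either $(i,i')\in E_\Gamma$ or $(k,k')\in E_\Lambda$, which kills $Y_{ii'}$ or $Z_{kk'}$ respectively. Hence $Y\otimes Z$ is dual-feasible for $\Gamma\boxtimes\Lambda$, so $\phi_{\Lin}(\Gamma\boxtimes\Lambda)\le \tr(Y\otimes Z)=\tr(Y)\tr(Z)=\phi_{\Lin}(\Gamma)\phi_{\Lin}(\Lambda)$.

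All the steps are routine manipulations of Kronecker products; the only place needing attention is the combinatorial claim that an edge of $\Gamma\boxtimes\Lambda$ forces an edge of $\Gamma$ or of $\Lambda$ between the corresponding coordinates — equivalently, the inclusion $\cc S_{\overline\Gamma}\otimes\cc S_{\overline\Lambda}\subseteq\cc S_{\overline{\Gamma\boxtimes\Lambda}}$ — which is immediate from unwinding the definition of the strong product. This is the classical tensor-power argument underlying multiplicativity of the Lov\'asz theta function, adapted to the normalized programs here, and I expect no genuine obstacle beyond bookkeeping.
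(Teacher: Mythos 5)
Your proof is correct and follows essentially the same route as the paper's: the lower bound via tensoring optimal primal solutions of \eqref{eq:PhiLinGraph}, and the upper bound via tensoring optimal dual solutions of \eqref{eq:phi-lin-dual} together with strong duality, with your decomposition $Y\otimes Z - J_{nm}/(nm) = (J_n/n)\otimes Q + P\otimes(J_m/m) + P\otimes Q$ being just a repackaging of the paper's expansion of $(Y_\Gamma - J_n/n)\otimes(Y_\Lambda - J_k/k)\succeq 0$.
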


\begin{proof}
$(\geq)$
We will show that the RHS satisfies the constraints of the LHS.
Let $|V_\Gamma|=n$ and $|V_\Lambda|=k$.
The RHS can be written as
\begin{equation*}
\begin{aligned}
    \phi_{\Lin}(\Gamma)\phi_{\Lin}(\Lambda)=\max&&&\frac{1}{nk}(A_\Gamma\bullet J_n)(A_\Lambda\bullet J_k)\\
    \text{s.t.}&&&[A_\Gamma]_{ii}=1,\:\forall i=1,\ldots, n,\\
    &&&[A_\Lambda]_{\ell\ell}=1,\:\forall \ell=1,\ldots, k,\\
    &&&[A_\Gamma]_{ij}=0\text{ if }(i,j)\in\overline{E_\Gamma},\\
    &&&[A_\Lambda]_{m\ell}=0\text{ if }(m,\ell)\in\overline{E_\Lambda},\\
    &&&A_\Gamma\in M_n^+,\quad A_\Lambda\in M_k^+.
\end{aligned}
\end{equation*}
$A_\Gamma\in M_n^+$ and $A_\Lambda\in M_k^+$ implies that $A_\Gamma\otimes A_\Lambda\in M_{nk}^+$, and $[A_\Gamma]_{ii}=[A_\Lambda]_{\ell\ell}=1$ is equivalent to $[A_\Gamma\otimes A_\Lambda]_{ss}=1$ for $s=1,\ldots, nk$.
Additionally, $(A_\Gamma\bullet J_n)(A_\Lambda\bullet J_k)=(A_\Gamma\otimes A_\Lambda)\bullet(J_n\otimes J_k)=(A_\Gamma\otimes A_\Lambda)\bullet J_{nk}$.

The tensor product of the matricial systems is $S_{\Gamma}\otimes S_{\Lambda}=S_{\Gamma\boxtimes \Lambda}$. Since $A_{\Gamma}\in S_{\Gamma}$ and $A_{\Lambda}\in S_{\Lambda}$ the tensor product $A_\Gamma\otimes A_\Lambda\in S_{\Gamma\boxtimes\Lambda}$, so it satisfies all the conditions for $\phi_{\Lin}(\Gamma\boxtimes\Lambda)$. Therefore $\phi_{\Lin}(\Gamma\boxtimes\Lambda) \geq \phi_{\Lin}(\Gamma)\phi_{\Lin}(\Lambda)$.

$(\leq)$ Starting with the dual program \eqref{eq:phi-lin-dual}, the first condition is equivalent to saying that $Y\in S_{\overline{\Gamma}}$.
Once again, we write the right hand side as
\begin{equation*}
\begin{aligned}
    \tilde{\phi}_{\Lin}(\Gamma)\tilde{\phi}_{\Lin}(\Lambda)=\min&&&(Y_\Gamma\bullet I_n)(Y_\Lambda\bullet I_k)\\
    \text{s.t.}&&&[Y_\Gamma]_{ij}=0\text{ if }(i,j)\in E_\Gamma,\\
    &&&[Y_\Lambda]_{k\ell}=0\text{ if }(k,\ell)\in E_\Lambda,\\
    &&&Y_\Gamma\succeq J_n/n,\quad Y_\Lambda\succeq J_k/k.
\end{aligned}
\end{equation*}
By the same reasoning as in the first direction, we have $Y_\Gamma\otimes Y_\Lambda\in S_{\overline{\Gamma\boxtimes\Lambda}}$.

It remains to show that $Y_\Gamma\otimes Y_\Lambda-J_{nk}/nk\succeq 0$.
Since $J_n\succeq 0$ for any $n$, this yields the relations $J_n/n\otimes(Y_\Lambda-J_k/k)\succeq 0$ and $(Y_\Gamma-J_n/n)\otimes J_k/k\succeq 0$, which is equivalent to $J_n/n\otimes Y_\Lambda\succeq J_n/n\otimes J_k/k$ and $Y_\Gamma\otimes J_k/k\succeq J_n/n\otimes J_k/k$.
Since $Y_\Gamma-J_n/n\succeq 0$ and $Y_\Lambda-J_k/k\succeq 0$, we get that the tensor product $(Y_\Gamma-J_n/n)\otimes(Y_\Lambda-J_k/k)\succeq 0$.
Expanding out the product and using the relations above gives
\begin{align*}
    Y_\Gamma\otimes Y_\Lambda&\succeq J_n/n\otimes Y_\Lambda + Y_\Gamma\otimes J_k/k - J_n/n\otimes J_k/k\\
    &\succeq J_n/n\otimes Y_\Lambda\\
    &\succeq J_n/n\otimes J_k/k.
\end{align*}
By transitivity, this implies $Y_\G\otimes Y_\Lambda\succeq J_n/n\otimes J_k/k=J_{nk}/nk$. $Y_\G\otimes Y_\Lambda$ satisfies every condition of $\tilde{\phi}_{\Lin}(\Gamma\boxtimes \Lambda)$, so $\tilde{\phi}_{\Lin}(\Gamma\boxtimes \Lambda)\leq \tilde{\phi}_{\Lin}(\Gamma)\tilde{\phi}_{\Lin}(\Lambda)$, and by duality $\phi_{\Lin}(\Gamma\boxtimes \Lambda)\leq \phi_{\Lin}(\Gamma)\phi_{\Lin}(\Lambda)$.
\end{proof}

\begin{prop}\label{prop:alt-phi-lin}
For any graph $\Gamma=(V,E)$, $\phi_{\Lin}$ can be written as the vector program
\begin{equation}\label{eq:lin_vector}
        \begin{aligned}
        {1}/{\sqrt{\phi_{\Lin}(\overline{\G})}} = \max&&&t\\
        \textup{s.t.}&&&\mathbf{u}_i^T\mathbf{u}_j = 0\textup{ if }(i,j)\in\overline{E},\\
        &&&\mathbf{c}^T\mathbf{u}_i\geq t,\:\forall i\in V,\\
        &&&\sum_{i=1}^{n}\mathbf{u}_i^T\mathbf{u}_i = n,\\
        &&&\| \mathbf{c}\| = 1.
    \end{aligned}
    \end{equation}
\end{prop}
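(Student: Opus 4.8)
The plan is to derive \eqref{eq:lin_vector} from the \emph{dual} semidefinite program for $\phi_{\Lin}(\overline{\G})$ supplied by Lemma~\ref{prop:phi-lin-sdp}, by first rewriting that SDP as a vector program via a Schur-complement/Gram decomposition and then rescaling. Write $\theta := \phi_{\Lin}(\overline{\G})$; since $A = I_n$ is feasible in the primal program of Corollary~\ref{cor:phi-lin-graph} applied to $\overline{\G}$ we have $\theta \ge 1$, and $\theta$ is finite, so $1/\sqrt{\theta}$ is well defined. Applying Lemma~\ref{prop:phi-lin-sdp} to $\overline{\G}$, whose edge set is $\overline E$, gives
\[
\theta \;=\; \min\bigl\{\, \tr(Y) \;:\; Y_{ij} = 0 \text{ for } (i,j)\in\overline E,\ \ Y \succeq J_n/n \,\bigr\},
\]
and this minimum is attained (by compactness of the feasible set intersected with a sublevel set of the trace).

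The first step is to recast this SDP as a vector program. Since $J_n/n = \tfrac1n \mathbf{j}\mathbf{j}^{T}$ with $\mathbf{j}$ the all-ones vector, a Schur-complement argument \cite{Zhang2005}*{Thm 1.12} shows $Y \succeq J_n/n$ if and only if the bordered matrix $\left(\begin{smallmatrix} Y & \tfrac{1}{\sqrt n}\mathbf{j} \\ \tfrac{1}{\sqrt n}\mathbf{j}^{T} & 1 \end{smallmatrix}\right) \succeq 0$; being positive semidefinite of size $n+1$, this is the Gram matrix of some vectors $\mathbf{v}_1,\dots,\mathbf{v}_n,\mathbf{c}$ with $Y_{ij} = \mathbf{v}_i^{T}\mathbf{v}_j$, $\mathbf{c}^{T}\mathbf{v}_i = 1/\sqrt n$ for all $i$, and $\|\mathbf{c}\| = 1$, and conversely any such family produces, through its Gram matrix and the reverse Schur complement, a $Y \succeq J_n/n$. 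Under this dictionary the constraint $Y_{ij}=0$ on $\overline E$ reads $\mathbf{v}_i^{T}\mathbf{v}_j = 0$ on $\overline E$, and $\tr(Y) = \sum_i\|\mathbf{v}_i\|^2$, so
\[
\theta \;=\; \min\Bigl\{\, \textstyle\sum_{i=1}^n\|\mathbf{v}_i\|^2 \;:\; \mathbf{v}_i^{T}\mathbf{v}_j = 0 \text{ for }(i,j)\in\overline E,\ \ \mathbf{c}^{T}\mathbf{v}_i = \tfrac{1}{\sqrt n}\ \forall i,\ \ \|\mathbf{c}\| = 1 \,\Bigr\},
\]
again with the minimum attained.

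It remains to show that the value of \eqref{eq:lin_vector} equals $1/\sqrt{\theta}$, a two-sided rescaling argument. For the inequality $\,\ge\,$: taking $\{\mathbf{v}_i^\ast\},\mathbf{c}^\ast$ optimal in the program above and setting $\mathbf{u}_i := \sqrt{n/\theta}\,\mathbf{v}_i^\ast$, we get $\sum_i\|\mathbf{u}_i\|^2 = n$, the orthogonality relations on $\overline E$ persist, $\|\mathbf{c}^\ast\| = 1$, and $(\mathbf{c}^\ast)^{T}\mathbf{u}_i = \sqrt{n/\theta}\cdot\tfrac{1}{\sqrt n} = 1/\sqrt{\theta}$, so $\{\mathbf{u}_i\},\mathbf{c}^\ast$ together with $t := 1/\sqrt{\theta}$ is feasible for \eqref{eq:lin_vector}. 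For the inequality $\,\le\,$: given any feasible $\{\mathbf{u}_i\},\mathbf{c},t$ of \eqref{eq:lin_vector} we may assume $t > 0$ (otherwise $t \le 1/\sqrt\theta$ trivially), hence $\mathbf{c}^{T}\mathbf{u}_i \ge t > 0$ for all $i$; setting $\mathbf{v}_i := \mathbf{u}_i/(\sqrt n\,\mathbf{c}^{T}\mathbf{u}_i)$ gives $\mathbf{c}^{T}\mathbf{v}_i = 1/\sqrt n$ while preserving orthogonality on $\overline E$ and $\|\mathbf{c}\| = 1$, so $\{\mathbf{v}_i\},\mathbf{c}$ is feasible for the program computing $\theta$, and therefore
\[
\theta \;\le\; \sum_{i=1}^n\|\mathbf{v}_i\|^2 \;=\; \sum_{i=1}^n \frac{\|\mathbf{u}_i\|^2}{n\,(\mathbf{c}^{T}\mathbf{u}_i)^2} \;\le\; \frac{1}{n t^2}\sum_{i=1}^n\|\mathbf{u}_i\|^2 \;=\; \frac{1}{t^2},
\]
i.e., $t \le 1/\sqrt\theta$. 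Combining the two bounds shows the optimal value of \eqref{eq:lin_vector} is $1/\sqrt{\theta} = 1/\sqrt{\phi_{\Lin}(\overline{\G})}$, attained by the point constructed in the first case, which justifies writing ``$\max$''.

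The only part that needs genuine care is the Schur-complement/Gram dictionary relating the matrix variable $Y$ to the vectors $\{\mathbf{v}_i\},\mathbf{c}$ — in particular that the bordered matrix is positive semidefinite \emph{precisely} when $Y \succeq J_n/n$ — together with checking that the per-vertex rescaling $\mathbf{v}_i = \mathbf{u}_i/(\sqrt n\,\mathbf{c}^{T}\mathbf{u}_i)$ is legitimate, which hinges on the strict positivity $\mathbf{c}^{T}\mathbf{u}_i \ge t > 0$; everything else is bookkeeping.
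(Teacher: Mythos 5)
Your proof is correct, and it reaches the same destination as the paper's by a noticeably different mechanism. Both arguments start from the dual SDP of Lemma~\ref{prop:phi-lin-sdp} applied to $\overline{\G}$, but the paper follows the G\"artner--Matou\v{s}ek treatment of $\vartheta$: it first shifts variables to $X = nY - J_n \succeq 0$, then in one direction builds the matrix $\tilde X_{ij} = \tfrac{\mathbf{u}_i^T\mathbf{u}_j}{(\mathbf{c}^T\mathbf{u}_i)(\mathbf{c}^T\mathbf{u}_j)} - 1$ from an optimal representation (with the diagonal normalized at the index $k$ where $\mathbf{c}^T\mathbf{u}_k = t$, and positivity argued via ``diagonal plus Gram''), and in the other direction takes a Cholesky factorization $X = S^TS$ and produces the handle $\mathbf{c}$ as a unit vector orthogonal to the columns $\mathbf{s}_i$ --- a step that requires the observation that the optimal $X$ is singular. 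You instead border $Y$ by $\tfrac{1}{\sqrt n}\mathbf{j}$ and invoke the Schur complement to get an exact Gram dictionary $Y_{ij} = \mathbf{v}_i^T\mathbf{v}_j$, $\mathbf{c}^T\mathbf{v}_i = 1/\sqrt n$, $\|\mathbf{c}\|=1$, after which both inequalities are pure rescalings (global by $\sqrt{n/\theta}$ one way, per-vertex by $1/(\sqrt n\,\mathbf{c}^T\mathbf{u}_i)$ the other, using $\mathbf{c}^T\mathbf{u}_i \ge t > 0$). What your route buys is that the handle comes for free from the bordered Gram matrix (no singularity argument needed), the normalization $\mathbf{c}^T\mathbf{v}_i = 1/\sqrt n$ is exact rather than an inequality saturated at one vertex, and the verification that the constructed matrix dominates $J_n/n$ is a one-line reverse Schur complement rather than a hands-on positivity check; the paper's route stays closer to the classical presentation of the Lov\'asz theta vector program and so makes the analogy with \cite{GartnerMatousek2012}*{Thm 3.6.1} explicit. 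Your side remarks (attainment of the minimum by compactness, $\theta \ge 1$ via the feasible point $A = I_n$, and the trivial case $t \le 0$) correctly patch the small well-posedness points that the paper leaves implicit.
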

\begin{proof}
Using the dual (\ref{eq:phi-lin-dual}), make the substitution $X=nY-J_n$:
\begin{equation}
    \begin{aligned}
        \tilde{\phi}_{\Lin}(\Gamma)=\min&&&\tr(X)/n+1\\
        \text{s.t.}&&&X_{ij}=-1\text{ if }(i,j)\in E,\\
        &&&X\succeq 0.
    \end{aligned}
\end{equation}
We will use a proof here similar to \cite{GartnerMatousek2012}*{Thm 3.6.1}. Write the value of the above vector program as $\phi'_{\Lin}(\G)$, and denote $\tilde{t}={1}/{\sqrt{\phi'_{\Lin}(\overline{\G})}}$.

     $(\phi'_{\Lin}(\overline\G)\geq\tilde{\phi}_{\Lin}(\overline\G))$: First, let $\mathcal{U}=(\mathbf{u}_1,\ldots, \mathbf{u}_n)$ be an optimal \textit{orthogonal} representation with handle $\mathbf{c}$. Say that the vectors are chosen such that for some $k$, $\mathbf{c}^T\mathbf{u}_k=t$. Then define the matrix $\tilde X$ with elements
    $$\tilde X_{ij} = \frac{\mathbf{u}_i^T\mathbf{u}_j}{(\mathbf{c}^T\mathbf{u}_i)(\mathbf{c}^T\mathbf{u}_j)}-1=\left(\mathbf{c}-\frac{\mathbf{u}_i}{\mathbf{c}^T\mathbf{u}_i}\right)^T\left(\mathbf{c}-\frac{\mathbf{u}_j}{\mathbf{c}^T\mathbf{u}_j}\right),\quad i\neq j;$$
    $$\tilde X_{ii} = \frac{\mathbf{u}_i^T\mathbf{u}_i}{(\mathbf{c}^T\mathbf{u}_k)^2}-1.$$
    To see that $\tilde{X}$ is positive semidefinite, note that
    $$\tilde X_{ii} \geq \frac{\mathbf{u}_i^T\mathbf{u}_i}{(\mathbf{c}^T\mathbf{u}_i)^2}-1,$$
    Therefore $\tilde{X}=D+S^TS$ for a matrix $S$ and a diagonal matrix $D$ with all nonnegative values. The sum of PSD matrices is PSD, so $\tilde{X}\succeq 0$.
    Note that for $(i,j)\in\overline E$, we assume $\mathbf{u}_i^T\mathbf{u}_j=0$, and thus $\tilde X_{ij}=-1$ as well. To examine the trace, note that
    $$\tr (\tilde X)/n+1 = \frac{1}{n}\sum_{i=1}^n\frac{\mathbf{u}_i^T\mathbf{u}_i}{(\mathbf{c}^T\mathbf{u}_k)^2}=\phi'_{\Lin}(\overline\G).$$
    $\tilde{X}$ satisfies all constraints and therefore $\phi'_{\Lin}(\overline\G)\geq \tilde{\phi}_{\Lin}(\overline\G)$.
    
     $(\phi'_{\Lin}(\overline\G)\leq\tilde{\phi}_{\Lin}(\overline\G))$: Given an optimal argument $X$ to $\tilde{\phi}_{\Lin}(\overline\G)$, let $X=S^TS$ by Cholesky decomposition, with $\mathbf{s}_i$ the columns of $S$. We want to construct a solution to the vector program with value $\tilde{t}$. $X$ is singular; therefore, $S$ must also be singular. Hence the $\mathbf{s}_i$ vectors do not span all of $\mathbb{R}^n$, and there exists a $\mathbf{c}$ orthogonal to all the $\mathbf{s}_i$. Now define $\mathbf{u}_i:=\frac{1}{\sqrt{t}}(\mathbf{c}+\mathbf{s}_i)$, where $t=\tr(X)/n+1$. Note that
    $$\sum_{i=1}^n\mathbf{u}_i^T\mathbf{u}_i = \sum_{i=1}^n \frac{1}{t}(1+X_{ii}) = \frac{1}{t}(n+\tr(X)) = n$$
    as well as the relation that if $(i,j)\in\overline{E}$ then $\mathbf{s}_i^T\mathbf{s}_j=0$. For $(i,j)\in\overline{E}$ we obtain
    $$\mathbf{u}_i^T\mathbf{u}_j = \frac{1}{t}(\mathbf{c}^T\mathbf{c}+\mathbf{c}^T\mathbf{s}_j+\mathbf{s}_i^T\mathbf{c}+\mathbf{s}^T_i\mathbf{s}_j)=\frac{1}{t}(1+X_{ij})=0.$$
    The value of $\tilde{t}$ for the vector program is given by 
    $$\mathbf{c}^T\mathbf{u}_i = \frac{1}{\sqrt{t}}(\mathbf{c}^T\mathbf{c}+\mathbf{c}^T\mathbf{s}_i)=\frac{1}{\sqrt{t}}=\tilde{t}.$$
    We have satisfied every constraint of the vector program, hence $1/\sqrt{\phi'_{\Lin}(\overline\G)}\geq 1/\sqrt{t}$. Therefore $\tilde{\phi}_{\Lin}(\overline\G)\leq t=\phi'_{\Lin}(\overline\G)$.
\end{proof}

\subsection{Counterexamples and Computations}
Here we examine several inequalities which are modifications of the above, yet none of the following are always true for any graph $\G$.

The wheel graph $W_n$ is defined by taking a cycle graph $C_{n-1}$ and connecting each vertex $\{1,\ldots, n-1\}$ to the vertex $n$. The path graph $P_n$ is defined by taking the cycle graph $C_n$ and removing the edge connecting $1$ to $n$. The star graph $S_n$ is defined by connecting each vertex $\{1,\ldots, n-1\}$ and connecting it to vertex $n$, and no other edges are placed. The complete graph is denoted as $K_n$, in which each vertex $\{1,\ldots, n\}$ is connected to every other vertex.

Among some of the identities shown above, there are a few relaxations and related inequalities which are not true.

\begin{prop}\label{thingsthatarewrong}
For any given graph $\Gamma=(V,E)$, the following inequalities are not true:
\begin{enumerate}
    \item \(\phi_{\Lin}(\G)\le \phi_{\Quad}(\G)\);\hspace{\fill} (see Proposition \ref{eq:phi_linquad_ineq})
    
    \item \(\phi_{\Quad}(\G)\ge \omega(\G)\);\hspace{\fill} \quad (see Proposition \ref{prop:phi-quad-omega})
    
    \item \(\phi_{\Quad}(\overline{\G})\ge \vartheta(\G)(\vartheta(\G)-1)/n+1\);\hspace{\fill} (see Proposition \ref{prop:phi-quad-omega})
    
    \item \(\phi_{\Quad}(\G \boxtimes \Lambda)\le\phi_{\Quad}(\G)\phi_{\Quad}(\Lambda) \);\hspace{\fill} (see Proposition \ref{eq:phi-lin-product})
    
    
    \item \(\vartheta(\G)\leq\phi_{\Lin}(\overline\G)\).\hspace{\fill} (see Lemma \ref{prop:phi-lin-sdp} and Proposition \ref{prop:alt-phi-lin})
\end{enumerate}
\end{prop}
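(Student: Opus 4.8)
The plan is to refute each of the five inequalities by producing an explicit graph and evaluating the relevant quantities via the semidefinite-program reformulations already established: Corollary \ref{cor:phi-lin-graph} for $\phi_{\Lin}$ and $\phi_{\Quad}$ of a graph, Lemma \ref{prop:phi-lin-sdp} for the dual of $\phi_{\Lin}$, and the bound \eqref{eq:phi-ineq}. For the three ``structural'' items I expect small graphs to suffice and the computations to be elementary.

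For (5), since Proposition \ref{prop:phi-quad-omega} gives $\vartheta(\G)\ge\phi_{\Lin}(\overline\G)$, I need a graph for which this is strict. The mechanism is that \eqref{eq:PhiLinGraph} forces the diagonal of the optimal matrix to be $1$, whereas the Lov\'asz program \eqref{eq:lovasz-sdp} only fixes the trace; hence any $\G$ whose $\vartheta$-optimal $Y$ is necessarily non-scalar on the diagonal works. I would take $\G=P_3$: here $\vartheta(P_3)=2$, achieved by $Y=\tfrac12(E_{11}+E_{13}+E_{31}+E_{33})$, but imposing unit diagonal gives $\phi_{\Lin}(\overline{P_3})=\tfrac53<2$. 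For (2), the genuine bound \eqref{eq:phi-ineq} already shows $\phi_{\Quad}(\G)$ falls below $\omega(\G)$ once $n$ exceeds $\omega(\G)$; for $\G=P_3$ (or a single edge plus isolated vertices) a direct computation with \eqref{eq:PhiQuadGraph} gives $\phi_{\Quad}(P_3)=\tfrac13(3+2\sqrt2)<2=\omega(P_3)$. For (1), Proposition \ref{eq:phi_linquad_ineq} gives $\phi_{\Lin}\ge\phi_{\Quad}$, so I need strict inequality, which must occur whenever the feasible set of \eqref{eq:PhiLinGraph}--\eqref{eq:PhiQuadGraph} fails to be rotationally symmetric about the $J_n$-direction, so that the Euclidean projection of $J_n$ onto it is not the maximizer of $A\bullet J_n$. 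I expect $\G=P_4$ to work: under the symmetry ansatz $A_{12}=A_{34}=s,\ A_{23}=t$ the positive-semidefiniteness region is $\{0\le s\le1,\ 0\le t\le 1-s^2\}$, on which $2s+t$ is maximized (value $2$) at $(s,t)=(1,0)$ while the projection of the all-ones configuration sits at the boundary point with $s^3+s=1$, yielding $\phi_{\Lin}(P_4)=2>\phi_{\Quad}(P_4)$.

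Items (3) and (4) are the hard ones. Inequality (3) would sharpen the true bound $\phi_{\Quad}(\overline\G)\ge\omega(\overline\G)(\omega(\overline\G)-1)/n+1$ by replacing $\omega(\overline\G)=\alpha(\G)$ with the larger $\vartheta(\G)$, so any counterexample must be a graph with $\vartheta(\G)>\alpha(\G)$ --- i.e.\ $\G$ not perfect --- for which $\phi_{\Quad}(\overline\G)$ nevertheless stays below $\vartheta(\G)(\vartheta(\G)-1)/n+1$. The obstruction I anticipate is that for the most symmetric non-perfect graphs (odd cycles and their complements, the Petersen graph, and similar vertex- and edge-transitive, spectrally extremal graphs) one computes $\phi_{\Quad}(\overline\G)=\vartheta(\G)$ exactly, which always exceeds the sharpened bound, so (3) holds for all of them; a refutation therefore needs a less symmetric graph for which $\phi_{\Quad}(\overline\G)$ drops strictly below $\phi_{\Lin}(\overline\G)$ by a sufficient margin, and I would locate such a graph by a semidefinite-programming search over small non-perfect graphs and then certify the strict inequality either numerically or by exhibiting the optimal matrix. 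For (4), although $\phi_{\Lin}$ is exactly multiplicative under $\boxtimes$ by Proposition \ref{eq:phi-lin-product}, $\phi_{\Quad}$ need not be submultiplicative: the Euclidean projection of $J_{nk}$ onto $\cc S_{\G\boxtimes\Lambda}$ can be strictly closer to $J_{nk}$ than the tensor $A_\ast^{\G}\otimes A_\ast^{\Lambda}$ of the two individual projections, which tends to force $\phi_{\Quad}(\G\boxtimes\Lambda)>\phi_{\Quad}(\G)\phi_{\Quad}(\Lambda)$. I would take $\G=\Lambda$ to be a graph with $\phi_{\Lin}>\phi_{\Quad}$ (again $P_4$ is the natural candidate) and compute $\phi_{\Quad}(\G\boxtimes\G)$ from \eqref{eq:PhiQuadGraph}, expecting it to exceed $\phi_{\Quad}(\G)^2$.

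The main obstacle is (3): unlike the other four items there is no one-line argument turning a known inequality the wrong way, the natural candidate families all satisfy it, and so the counterexample has to be genuinely searched for and then verified --- the verification being the delicate part, since it amounts to certifying a strictly negative gap in a semidefinite program.
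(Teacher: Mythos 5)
Your treatment of items (1), (2) and (5) is correct and, unlike the paper's proof, fully analytic: the paper establishes all five items by reporting floating-point values produced by SDP solvers (code in the cited repository), using $P_5$, $W_5$, $K_{11}$ minus an edge, $P_5\boxtimes P_5$, and a star graph. Your exact computations check out: for (5), $\vartheta(P_3)=2$ while the unit-diagonal constraint in \eqref{eq:PhiLinGraph} forces $\phi_{\Lin}(\overline{P_3})=5/3$; for (2), the feasible region for $P_3$ is the quarter-disk $a^2+b^2\le 1$, so $\phi_{\Quad}(P_3)=(3+2\sqrt2)/3<2=\omega(P_3)$; for (1), the symmetry reduction for $P_4$ is legitimate (the feasible set is invariant under the automorphism group and the $\phi_{\Quad}$-objective is strictly convex, so the unique minimizer is symmetric), the region is $\{0\le s\le 1,\ 0\le t\le 1-s^2\}$ as you claim, $\phi_{\Lin}(P_4)=2$ (matching the upper bound $\vartheta(\overline{P_4})=2$), and the minimizer satisfies $s^3+s=1$, giving $\phi_{\Quad}(P_4)\approx 1.9495<2$. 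These are cleaner certificates than the paper's numerics.

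The gap is that items (3) and (4) are not actually established. For (4) you name a candidate ($P_4\boxtimes P_4$) and a heuristic mechanism, but ``expecting it to exceed $\phi_{\Quad}(\G)^2$'' is not a verification; the proposition asserts the existence of a counterexample, so you must either carry out the computation (as the paper does numerically for $P_5\boxtimes P_5$) or give an argument. Note also that your heuristic is not airtight: the projection of $J_{nk}$ onto the product feasible set being closer to $J_{nk}$ than $A_\ast^{\G}\otimes A_\ast^{\Lambda}$ does not by itself force its inner product with $J_{nk}$ to be larger. For (3) you have no candidate at all, only a search strategy. Your necessary condition $\vartheta(\G)>\alpha(\G)$ is correct (if $\vartheta(\G)=\alpha(\G)=\omega(\overline\G)$ the claimed inequality reduces to the proven bound \eqref{eq:phi-ineq} applied to $\overline\G$), and it is worth recording that this condition already disqualifies the paper's own example: $K_{11}$ minus an edge is perfect with $\vartheta=\alpha=2$, and a direct computation gives $\phi_{\Quad}(\overline\G)=13/11$ exactly, equal to the right-hand side, so the reported violation of $2.6\cdot 10^{-7}$ is a solver artifact rather than a counterexample. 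So for item (3) neither you nor, apparently, the paper has a valid counterexample; settling it requires exhibiting a specific non-perfect graph together with an exact (or rigorously certified) evaluation of $\phi_{\Quad}(\overline\G)$, and until that is done the proposition as stated is not fully proved.
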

\begin{proof} The code by which all of the following results were obtained is contained in the GitHub repository \cite{AGKS}.

Counterexample for \(\phi_{\Lin}(\G)\le \phi_{\Quad}(\G)\). Consider \(\G\) to be a path graph with \(5\) vertices, we obtain the following values:
\begin{itemize}
    \item\(\phi_{\Lin}(\G)=1.9798\),
    \item \(\phi_{\Quad}(\G)=1.9593\),
    \item \(\text{error}=0.0205\).
\end{itemize}


Counterexample for \(\phi_{\Quad}(\G)\ge \omega(\G)\). Consider \(\G\) to be a wheel graph of the order \(5\), we obtain the following values:
\begin{itemize}
    \item \(\phi_{\Quad}(\G)=2.9314\),
    \item \(\omega(\G)=3\),
    \item \(\text{error}= 0.0686\).
\end{itemize}

Counterexample for \(\phi_{\Quad}(\overline{\G})\ge \vartheta(\G)(\vartheta(\G)-1)/n+1\). Consider \(\G=K_{11}\) minus one edge, we obtain the following values:
\begin{itemize}
    \item \(\phi_{\Quad}(\overline{\G}) =1.18181791957969\),
    \item \(\vartheta(\G) = 1.999999999999876\),
    \item \(\text{error}=2.62238484927124\cdot 10^{-7}\).
\end{itemize}
The primal-dual gap from computing $\phi_{\Quad}(\overline\G)$ was given as $\delta\phi=1.45\cdot 10^{-11}$, and the gap from computing $\vartheta(\G)$ was $\delta\vartheta=5.55\cdot10^{-14}$. Ignoring machine precision error, the error of the right hand side of the inequality $f(\G)=\vartheta(\G)(\vartheta(\G)-1)/n+1$ is
$$\delta f = \max_{\pm}|f(\vartheta\pm\delta\vartheta)-f(\theta)|=1.5\cdot10^{-14}.$$
The difference in the two sides is therefore large enough that this error cannot be amounted to lack of precision in the solvers. Albeit a small error, it is still significant enough to be a counterexample.

Counterexample for \(\phi_{\Quad}(\G \boxtimes \Lambda)\le\phi_{\Quad}(\G)\phi_{\Quad}(\Lambda)\). Consider \(\G\) and \(\Lambda\) to be a path graph of five edges, we obtain the following values:
\begin{itemize}
    \item \(\phi_{\Quad}(\G \boxtimes \Lambda)=3.8660\),
    \item \(\phi_{\Quad}(\G)\phi_{\Quad}(\Lambda)=3.8387\),
    \item \(\text{error}=0.0272\).
\end{itemize}

Counterexample for \(\vartheta(\G)\leq\phi_{\Lin}(\overline\G)\). Consider a star graph with \(5\) edges, we obtain the following values:
\begin{itemize}
    \item \(\phi_{\Lin}(\G)=1.8000\),
    \item \(\vartheta(\G)=4.0000\),
    \item \(\text{error}=2.200\).
\end{itemize} 
In this way we have provided counterexamples to all five assertions.\qedhere
\end{proof}

\subsection{Further Discussion and Open Problems}

\begin{question}
    Is $\phi_{\Quad}(\cc S)$ expressible as a semidefinite program?
\end{question}

\begin{question}
Do the following identities hold for all cycle graphs $C_n$?
    \begin{equation}
        \phi_{\Lin}(\overline{C_n})=\vartheta(C_n),\quad \phi_{\Lin}(C_n)=\vartheta(\overline{C_n}).
    \end{equation}
\end{question}

\begin{remark}
The equality $\phi_{\Quad}(\overline{C_n})=\vartheta(C_n)$ does not appear to hold unless $n\leq 5$ or $n=7$.
\end{remark}

\begin{question}
    Is it true that $\phi_{\rm quad}$ or $\phi_{\rm lin}$ are supermultiplicative? That is, is \[\phi_{\rm quad}(\cc S\otimes \cc T)\geq \phi_{\rm quad}(\cc S)\phi_{\rm quad}(\cc T),\quad
    \phi_{\rm lin}(\cc S\otimes \cc T)\geq \phi_{\rm lin}(\cc S)\phi_{\rm lin}(\cc T)\]
    for all operator systems $\cc S\subset M_n$ and $\cc T\subset M_k$?
\end{question}

\section*{Acknowledgements}
Roy Araiza was partially supported as a J.L. Doob Research Assistant Professor at the University of Illinois at Urbana-Champaign. Griffin, Khilnani, and Sinclair were partially supported by NSF grant DMS-2055155. The authors would like to thank Travis Russell for valuable input on an early draft of the manuscript.

\begin{bibdiv}
\begin{biblist}

\bib{AGS}{article}{
    author={Araiza, Roy},
    author={Griffin, Colton},
    author={Sinclair, Thomas},
    title={An index for inclusions of operator systems},
    journal={preprint},
    pages={\url{https://arxiv.org/abs/2203.05710}}
    }
    
\bib{AGKS}{article}{
    author={Araiza, Roy},
    author={Griffin, Colton},
    author={Khilnani, Aneesh},
    author={Sinclair, Thomas},
    title={Approximating-Projections-Scripts},
    journal={GitHub repository}
    pages={\url{https://github.com/chemfinal\%2Ddot/Approximating\%2DProjections\%2DScripts}}
    }

\bib{Cameron2007}{article}{
   author={Cameron, Peter J.},
   author={Montanaro, Ashley},
   author={Newman, Michael W.},
   author={Severini, Simone},
   author={Winter, Andreas},
   title={On the quantum chromatic number of a graph},
   journal={Electron. J. Combin.},
   volume={14},
   date={2007},
   number={1},
   pages={Research Paper 81, 15},
   review={\MR{2365980}},
}

\bib{ChoiEffros1977}{article}{
   author={Choi, Man Duen},
   author={Effros, Edward G.},
   title={Injectivity and operator spaces},
   journal={J. Functional Analysis},
   volume={24},
   date={1977},
   number={2},
   pages={156--209},
   review={\MR{0430809}},
   doi={10.1016/0022-1236(77)90052-0},
}

\bib{Winter2013}{article}{
   author={Duan, Runyao},
   author={Severini, Simone},
   author={Winter, Andreas},
   title={Zero-error communication via quantum channels, noncommutative
   graphs, and a quantum Lov\'{a}sz number},
   journal={IEEE Trans. Inform. Theory},
   volume={59},
   date={2013},
   number={2},
   pages={1164--1174},
   issn={0018-9448},
   review={\MR{3015725}},
   doi={10.1109/TIT.2012.2221677},
}

\bib{FarenickPaulsen2012}{article}{
   author={Farenick, Douglas},
   author={Paulsen, Vern I.},
   title={Operator system quotients of matrix algebras and their tensor
   products},
   journal={Math. Scand.},
   volume={111},
   date={2012},
   number={2},
   pages={210--243},
   issn={0025-5521},
   review={\MR{3023524}},
   doi={10.7146/math.scand.a-15225},
}

\bib{GartnerMatousek2012}{book}{
   author={G\"{a}rtner, Bernd},
   author={Matou\v{s}ek, Ji\v{r}\'{\i}},
   title={Approximation algorithms and semidefinite programming},
   publisher={Springer, Heidelberg},
   date={2012},
   pages={xii+251},
   isbn={978-3-642-22014-2},
   isbn={978-3-642-22015-9},
   review={\MR{3015090}},
   doi={10.1007/978-3-642-22015-9},
}

\bib{Li2020}{article}{
   author={Gao, Li},
   author={Junge, Marius},
   author={LaRacuente, Nicholas},
   title={Relative entropy for von Neumann subalgebras},
   journal={Internat. J. Math.},
   volume={31},
   date={2020},
   number={6},
   pages={2050046, 35},
   issn={0129-167X},
   review={\MR{4120441}},
   doi={10.1142/S0129167X20500469},
}

\bib{GoemansWilliamson2004}{article}{
   author={Goemans, Michel X.},
   author={Williamson, David P.},
   title={Approximation algorithms for MAX-3-CUT and other problems via
   complex semidefinite programming},
   journal={J. Comput. System Sci.},
   volume={68},
   date={2004},
   number={2},
   pages={442--470},
   issn={0022-0000},
   review={\MR{2059103}},
   doi={10.1016/j.jcss.2003.07.012},
}

    \bib{Lovasz1979}{article}{
   author={Lov\'{a}sz, L\'{a}szl\'{o}},
   title={On the Shannon capacity of a graph},
   journal={IEEE Trans. Inform. Theory},
   volume={25},
   date={1979},
   number={1},
   pages={1--7},
   issn={0018-9448},
   review={\MR{514926}},
   doi={10.1109/TIT.1979.1055985},
}

\bib{Lovasz2003}{article}{
   author={Lov\'{a}sz, L.},
   title={Semidefinite programs and combinatorial optimization},
   conference={
      title={Recent advances in algorithms and combinatorics},
   },
   book={
      series={CMS Books Math./Ouvrages Math. SMC},
      volume={11},
      publisher={Springer, New York},
   },
   date={2003},
   pages={137--194},
   review={\MR{1952986}},
   doi={10.1007/0-387-22444-0\_6},
}

    \bib{paulsen2002completely}{book}{
    AUTHOR = {Paulsen, Vern},
     TITLE = {Completely bounded maps and operator algebras},
    SERIES = {Cambridge Studies in Advanced Mathematics},
    VOLUME = {78},
 PUBLISHER = {Cambridge University Press, Cambridge},
      YEAR = {2002},
     PAGES = {xii+300},
      ISBN = {0-521-81669-6},
}

\bib{Boyd1996}{article}{
    AUTHOR = {Vandenberghe, Lieven},
    AUTHOR = {Boyd, Stephen},
     TITLE = {Semidefinite programming},
   JOURNAL = {SIAM Rev.},
    VOLUME = {38},
      YEAR = {1996},
    NUMBER = {1},
     PAGES = {49--95},
      ISSN = {0036-1445},
    REVIEW = {\MR{1379041}},
       DOI = {10.1137/1038003},
       URL = {https://doi.org/10.1137/1038003},
}

\bib{Watrous2009}{article}{
   author={Watrous, John},
   title={Semidefinite programs for completely bounded norms},
   journal={Theory Comput.},
   volume={5},
   date={2009},
   pages={217--238},
   review={\MR{2592394}},
   doi={10.4086/toc.2009.v005a011},
}

\bib{Zhang2005}{book}{
     TITLE = {The {S}chur complement and its applications},
    SERIES = {Numerical Methods and Algorithms},
    VOLUME = {4},
    EDITOR = {Zhang, Fuzhen},
 PUBLISHER = {Springer-Verlag, New York},
      YEAR = {2005},
     PAGES = {xvi+295},
      ISBN = {0-387-24271-6},
   REVIEW = {\MR{2160825}},
       DOI = {10.1007/b105056},
       URL = {https://doi.org/10.1007/b105056},
}

\end{biblist}
\end{bibdiv}

\end{document}